\newcommand{\ceiling}[1]{\lceil #1\rceil}
\newcommand{\floor}[1]{\lfloor #1\rfloor}
\newcommand{\skips}[1]{\mathtt{skip[}#1\mathtt{]}}
\newcommand{\ablock}{\mathtt{block}}
\newcommand{\sendblock}[1]{\mathtt{sendblock[}#1\mathtt{]}}
\newcommand{\recvblock}[1]{\mathtt{recvblock[}#1\mathtt{]}}
\newcommand{\nextlink}[1]{\mathtt{next[}#1\mathtt{]}}
\newcommand{\prevlink}[1]{\mathtt{prev[}#1\mathtt{]}}
\newcommand{\mpibcast}{\textsf{MPI\_\-Bcast}\xspace}
\newcommand{\mpiallgather}{\textsf{MPI\_\-Allgather}\xspace}
\newcommand{\mpiallgatherv}{\textsf{MPI\_\-Allgatherv}\xspace}
\newcommand{\mpireduce}{\textsf{MPI\_\-Reduce}\xspace}
\newcommand{\mpireducescatterblock}{\textsf{MPI\_\-Reduce\_\-scatter\_\-block}\xspace}
\newcommand{\mpireducescatter}{\textsf{MPI\_\-Reduce\_\-scatter}\xspace}
\newcommand{\mpiint}{\textsf{MPI\_INT}\xspace}
\newcommand{\bidirec}[2]{\textsf{Send}(#1)\parallel\textsf{Recv}(#2)\xspace}
\newtheorem{theorem}{Theorem}
\newtheorem{lemma}{Lemma}
\newtheorem{observation}{Observation}
\newcommand{\gcc}{\texttt{gcc~8.3.0}\xspace}
\newcommand{\hydraopenmpi}{OpenMPI~4.0.5\xspace}
\newcommand{\vegampi}{OpenMPI~4.1.5\xspace}
\title{Optimal Broadcast Schedules in Logarithmic Time with
  Applications to Broadcast, All-Broadcast, Reduction and
  All-Reduction}
\author{Jesper Larsson Tr\"aff\\
  TU Wien\\
  Faculty of Informatics\\
  Institute of Computer Engineering, Research Group Parallel Computing 191-4\\
  Treitlstrasse 3, 5th Floor, 1040 Vienna, Austria}
\date{July 22nd, 2024}
\begin{document}

\maketitle

\begin{abstract}
  We give optimally fast $O(\log p)$ time (per processor) algorithms
  for computing round-optimal broadcast schedules for message-passing
  parallel computing systems.  This affirmatively answers difficult
  questions posed in a SPAA 2022 BA and a CLUSTER 2022 paper.  We
  observe that the computed schedules and circulant communication graph
  can likewise be used for reduction, all-broadcast and all-reduction
  as well, leading to new, round-optimal algorithms for these
  problems. These observations affirmatively answer open questions
  posed in a CLUSTER 2023 paper.

  The problem is to broadcast $n$ indivisible blocks of data from a
  given root processor to all other processors in a (subgraph of a)
  fully connected network of $p$ processors with fully bidirectional,
  one-ported communication capabilities. In this model,
  $n-1+\ceiling{\log_2 p}$ communication rounds are required. Our new
  algorithms compute for each processor in the network receive and
  send schedules each of size $\ceiling{\log_2 p}$ that determine
  uniquely in $O(1)$ time for each communication round the new block
  that the processor will receive, and the already received block it
  has to send. Schedule computations are done independently per
  processor without communication. The broadcast communication
  subgraph is an easily computable, directed, $\ceiling{\log_2
    p}$-regular circulant graph also used elsewhere. We show how
  the schedule computations can be done in optimal time and space of
  $O(\log p)$, improving significantly over previous results of
  $O(p\log^2 p)$ and $O(\log^3 p)$, respectively. The schedule computation
  and broadcast algorithms are simple to implement, but correctness and
  complexity are not obvious. The schedules are used for new
  implementations of the MPI (Message-Passing Interface) collectives
  \mpibcast, \mpiallgatherv,\mpireduce and
  \mpireducescatter. Preliminary experimental results are given.
  Carefully engineered and extensively evaluated implementations will
  be presented elsewhere.
\end{abstract}

\section{Introduction}

The broadcast problem is a fundamental problem in message-passing
computing and broadcasting a fundamental, collective operation in
programming interfaces for distributed-memory parallel computing.  In
this paper, we give a particular solution in terms of pipelining that
can be used to implement this operation (\mpibcast) as defined for the
Message-Passing Interface (MPI)~\cite{MPI-4.1} for any number of
processes, and show that our resulting, optimal schedules (in terms of
preprocessing time and space and in terms of number of communication
rounds) and circulant communication graphs can also be used to
implement the related collective communication operations
all-broadcast (\mpiallgatherv), reduction (\mpireduce) and
all-reduction (\mpireducescatterblock and \mpireducescatter). This
addresses and solves a number of problems and conjectures of recent
papers~\cite{Traff22:bcastba,Traff22:bcast,Traff23:circulant}.

The broadcasting problem considered here is the following. In a
distributed memory system with $p$ processors, a designated root
processor has $n$ indivisible blocks of data that have to be
communicated to all other processors in the system. Each processor can
in a communication operation send an already known block to some other
processor and at the same time receive a(n unknown, new) block from
some other, possibly different processor. Blocks can be sent and
received in unit time, where the time unit depends on the size of the
blocks which are assumed to all have (roughly) the same size. Since
communication of blocks takes the same time, the complexity of an
algorithm for solving the broadcast problem can be stated in terms of
the number of communication rounds
that are required for the last processor to have received all
$n$ blocks from the root.  In this fully-connected, one-ported, fully
(send-receive) bidirectional $p$ processor
system~\cite{BarNoyKipnis94,BarNoyKipnisSchieber00}, any broadcast
algorithm requires $n-1+\ceiling{\log_2 p}$ communication rounds as is
well known.  A number of algorithms reach this optimal number of
communication rounds with different communication patterns in a fully
connected
network~\cite{BarNoyKipnisSchieber00,Jia09,KwonChwa95,Traff22:bcast,Traff08:optibcast}.

The optimal communication round algorithm given
in~\cite{Traff08:optibcast} was used to implement the \mpibcast
operation for MPI~\cite{MPI-4.1}. Thus a concrete, implementable
solution was given, unfortunately with a much too high schedule
precomputation cost of $O(p\log^2 p)$ sequential steps which was
partially amortized through caching of schedules with
communicators~\cite{Traff06:mpisxcoll}. An advantage of this algorithm
compared to other solutions to the broadcast
problem~\cite{BarNoyKipnisSchieber00,Jia09,KwonChwa95} is its simple,
$\ceiling{\log_2 p}$-regular circulant graph communication pattern,
where all processors throughout the broadcasting operation operate
symmetrically. Circulant graphs with different skips are used in many
attractive algorithms for different collective communication
problems~\cite{BarNoyKipnisSchieber93,BernaschiIannelloLauria03,BruckHo93,Bruck97,Traff23:circulant}. The
circulant graph pattern combined with optimal broadcast schedules
makes it possible to implement also the the difficult, irregular
all-broadcast operation \mpiallgatherv more efficiently as was shown
recently in~\cite{Traff22:bcast}. As we observe in this paper, we can
likewise implement the reduction (\mpireduce) and the difficult,
irregular as well as regular all-reduction (\mpireducescatter,
\mpireducescatterblock) operations. These implementations answer
questions posed in~\cite{Traff23:circulant}.

In~\cite{Traff22:bcastba,Traff22:bcast}, a substantial improvement of
the schedule computation cost compared to~\cite{Traff08:optibcast} was
given, from super-linear $O(p\log^2p)$ to $O(\log^3 p)$ time steps,
thus providing a practically much more relevant algorithm. However, no
adequate correctness proof was presented.  A major challenge was to
get the schedule computation down to $O(\log p)$ time steps. This is
optimal, since at least $\ceiling{\log_2 p}$ communication rounds are
required independently of $n$ in which each processor sends and
receives different blocks.

In this paper, we constructively prove the conjecture that round
optimal send and receive schedules can be computed in $O(\log p)$
operations per processor (without any communication) by stating and
analyzing the corresponding algorithms. The new algorithms use the
same circulant graph communication pattern and give rise to the same
schedules as those constructed by previous
algorithms~\cite{Traff22:bcastba,Traff22:bcast,Traff08:optibcast} but
the algorithmic ideas are different. They are readily implementable
and of great practical relevance, as indicated by preliminary
experiments, for the full spectrum of \mpibcast, \mpiallgather,
\mpiallgatherv, \mpireduce, \mpireducescatterblock and
\mpireducescatter collectives. The paper focuses on the schedule
properties and algorithms for schedule computation. More careful,
tuned implementation and experimental evaluation of the collective
operations will be given elsewhere. Nevertheless, we think the
algorithms presented here can become algorithms of choice for MPI
libraries and implementers.

\section{Broadcast(ing with) Schedules and Circulant Graphs}
\label{sec:broadcasts}

Our problem is to distribute (broadcast) $n$ indivisible, disjoint
blocks of data from a single, designated \emph{root} processor to all
other processors in a $p$-processor system with processors $r,0\leq
r<p$ that each communicate with certain other processors by
simultaneously sending and receiving data blocks. The $n$ blocks for
instance constitute a buffer of $m$ data units, $m\geq n$, which can
be broadcast by broadcasting the $n$ blocks, each of size at most
$\ceiling{m/n}$ units. Choosing a best $n$ for a given $m$ is a system
dependent tuning problem that is not addressed at all in this paper.

We first show how broadcast from a designated root, $r=0$ (without
loss of generality) can be done with a regular, symmetric
communication pattern and explicit send and receive schedules that in
constant time determine for each communication operation by each
processor which block is received and which block is sent. We use this
algorithm to formulate the correctness conditions on possible send and
receive schedules. Based on this, we then in the next section show how
correct schedules can be computed fast in $O(\log p)$ time per
processor, independently of all other processors and with no
communication.

In all of the following, we let $p$ denote the number of processors
and take $q=\ceiling{\log_2 p}$.

\begin{algorithm}
  \caption{The $n$-block broadcast algorithm for processor $r,0\leq
    r<p$ of data blocks in array \texttt{buf}. Round $x$ numbers
    the first round where actual communication takes place. Blocks
    smaller than $0$ are neither sent nor received, and for blocks
    larger than $n-1$, block $n-1$ is sent and received instead. Also
    blocks to the root processor are not sent. This is assumed to be
    taken care of by the bidirectional $\bidirec{}{}$ communication
    operations.}
  \label{alg:broadcast}
  \begin{algorithmic}
    \Procedure{Broadcast}{$\mathtt{buf}[n]$}
    \State$\Call{recvschedule}{r,\recvblock{}}$
    \State$\Call{sendschedule}{r,\sendblock{}}$
    \State $x\gets (q-(n-1)\bmod q)\bmod q$ 
    \For{$i=0,1,\ldots,q-1$} 
    \State$\mathtt{recvblock}[i]\gets\recvblock{i}-x$
    \State$\mathtt{sendblock}[i]\gets\sendblock{i}-x$
    \If{$i<x$} \Comment Virtual rounds before $x$ already done
    \State$\mathtt{recvblock}[i]\gets\recvblock{i}+q$
    \State$\mathtt{sendblock}[i]\gets\sendblock{i}+q$
    \EndIf
    \EndFor
    \For{$i=x,x+1,\ldots,(n+q-1+x)-1$}
    \State $k\gets i\bmod q$
    \State $t\gets (r+\skips{k})\bmod p$
    \Comment to- and from-processors
    \State $f\gets (r-\skips{k}+p)\bmod p$
    \State $\bidirec{\mathtt{buf}[\sendblock{k}],t}{\mathtt{buf}[\recvblock{k}],f}$
    \State $\sendblock{k}\gets \sendblock{k}+q$
    \State $\recvblock{k}\gets \recvblock{k}+q$
    \EndFor
    \EndProcedure
\end{algorithmic}
\end{algorithm}

The generic $n$-block broadcast algorithm is shown as
Algorithm~\ref{alg:broadcast}. It proceeds in rounds and is symmetric
in the sense that all processors follow the same regular, circulant
graph communication pattern and do the same communication operations
in each round.  For the rooted, asymmetric broadcast operation, this
is perhaps surprising. Rounds start from some round $x$ (to be
explained shortly) and end at round $n-1+q+x-1$ for a total of the
required $n-1+q$ communication rounds.  For round $i,x\leq i<n-1+q+x$,
we take $k=i\bmod q$, such that always $0\leq k<q$. In round $i$, each
processor $r, 0\leq r<p$ simultaneously sends a block to a
\emph{to-processor} $t^k_r=(r+\skips{k})\bmod p$ and receives a
different block from a \emph{from-processor} $f^k_r
=(r-\skips{k}+p)\bmod p$, determined by a skip per round $\skips{k},
0\leq k<q$. These skips are the same for all processors and define the
circulant graph.  The blocks that are sent and received are numbered
consecutively from $0$ to $n-1$ and stored in a \texttt{buf} array
indexed by the block number. The block that processor $r,0\leq r<p$
sends in round $i$ is determined by a send schedule array
$\sendblock{k}_r$ and likewise the block that processor $r$ will
receive in round $i$ by a receive schedule array
$\recvblock{k}_r$. Since the blocks are thus fully determinate, no
block indices or other meta-data information needs to be
communicated. The $\sendblock{}_r$ and $\recvblock{}_r$ arrays are
computed such that blocks are effectively broadcast from root
processor $r=0$ that initially has all $n$ blocks, and such that all
$n$ blocks are received and sent further on by all the other
processors. The starting round $x$ is chosen such that $(n-1+q+x)\bmod
q=(n-1+x)\bmod q=0$ and after this last round, which is a multiple of
$q$, all processors will have received all $n$ blocks. The assumption
that processor $r=0$ is the root can be made without loss of
generality: Should some other processor $r'\neq 0$ be root, the processors
are simply renumbered by subtracting $r'$ (modulo $p$) from the
processor indices.

To make Algorithm~\ref{alg:broadcast} efficient and correct, no block
is ever sent back to the root (which has all blocks in the first
place), so there is no send operation if $t^k_r=0$ for processor $r$
in round $k$. Non-existent, negatively indexed blocks are neither sent
nor received, so if $\sendblock{k}_r<0$ or $\recvblock{k}_r<0$ for some
$k$, the corresponding send or receive communication is simply
ignored. For block indices larger than the last block $n-1$, block
$n-1$ is instead sent or received. These cases are assumed to be
handled by the concurrent send- and receive operations denoted by
$\bidirec{}{}$. The receive and send block schedules $\recvblock{}_r$
and $\sendblock{}_r$ are computed by the calls to
\textsc{recvschedule()} and \textsc{sendschedule()} to be derived in
Section~\ref{sec:recvschedule} and Section~\ref{sec:sendschedule}.
Receive and send schedules for $p=17$ (a non-power of $2$) with $q=5$ is
shown in Table~\ref{tab:p17}, which can be used to trace the communication
when broadcasting with Algorithm~\ref{alg:broadcast}.

\begin{table}
  \caption{Receive and send schedule for a non-power-of-two number of
    processors, $p=17$, $q=\ceiling{\log_2 p}=5$.  The table shows for
    each processor $r,0\leq r<p$ the baseblock $b$ and the $\recvblock{k}_r$ and
    $\sendblock{k}_r$ schedules for $k=0,1,2,3,4$. The vertical bars correspond
    to $\skips{k}$}
  \label{tab:p17}
  \begin{center}
  \begin{tabular}{cr|r|r|rr|rrrr|rrrrrrrrrr|}
$r$: &  0 & 1 & 2 & 3 & 4 & 5 & 6 & 7 & 8 & 9 & 10 & 11 & 12 & 13 & 14 & 15 & 16 \\
$b$: &5 & 0 & 1 & 2 & 0 & 3 & 0 & 1 & 2 & 4 & 0 & 1 & 2 & 0 & 3 & 0 & 1 \\
\toprule
$\recvblock{0}$: & -4 & 0 & -5 & -4 & -3 & -5 & -2 & -5 & -4 & -3 & -1 & -5 & -4 & -3 & -5 & -2 & -5 \\
$\recvblock{1}$: & -5 & -4 & 1 & -5 & -4 & -3 & -3 & -2 & -5 & -4 & -3 & -1 & -5 & -4 & -3 & -3 & -2 \\
$\recvblock{2}$: & -2 & -2 & -2 & 2 & 0 & -4 & -4 & -3 & -2 & -2 & -4 & -3 & -1 & -1 & -4 & -4 & -3 \\
$\recvblock{3}$: & -1 & -3 & -3 & -2 & -2 & 3 & 0 & 1 & 2 & -5 & -2 & -2 & -2 & -2 & -1 & -1 & -1 \\
$\recvblock{4}$: & -3 & -1 & -1 & -1 & -1 & -1 & -1 & -1 & -1 & 4 & 0 & 1 & 2 & 0 & 3 & 0 & 1 \\
\midrule
$\sendblock{0}$: & 0 & -5 & -4 & -3 & -5 & -2 & -5 & -4 & -3 & -1 & -5 & -4 & -3 & -5 & -2 & -5 & -4 \\
$\sendblock{1}$: & 1 & -5 & -4 & -3 & -3 & -2 & -5 & -4 & -3 & -1 & -5 & -4 & -3 & -3 & -2 & -5 & -4 \\
$\sendblock{2}$: & 2 & 0 & -4 & -4 & -3 & -2 & -2 & -4 & -3 & -1 & -1 & -4 & -4 & -3 & -2 & -2 & -2 \\
$\sendblock{3}$: & 3 & 0 & 1 & 2 & -5 & -2 & -2 & -2 & -2 & -1 & -1 & -1 & -1 & -3 & -3 & -2 & -2 \\
$\sendblock{4}$: & 4 & 0 & 1 & 2 & 0 & 3 & 0 & 1 & -3 & -1 & -1 & -1 & -1 & -1 & -1 & -1 & -1 \\
\bottomrule
  \end{tabular}
  \end{center}
\end{table}

For the algorithm to be correct in the sense of broadcasting all
blocks from root processor $r=0$ to all other processors, the following
conditions must hold:

\begin{enumerate}
  \item
    The block that is received in round $i$ with $k=i\bmod q$ by some
    processor $r$ must be the block that is sent by the from-processor
    $f_r^k$, $\recvblock{k}_r = \sendblock{k}_{f_r^k}$.
  \item
    The block that processor $r$ sends in round $i$ with $k=i\bmod q$
    must be the block that the to-processor $t_r^k$ will receive,
    $\sendblock{k}_r = \recvblock{k}_{t_r^k}$.
  \item
    \label{cor:qblocks}
    Over any $q$ successive rounds, each processor must receive $q$
    different blocks. More concretely,
    $\bigcup_{k=0}^{q-1}\{\recvblock{k}_r\}
    =(\{-1,-2,\ldots,-q\}\setminus\{b_r-q\})\cup\{b_r\}$ where $b_r, 0\leq
    b_r<q$ is a non-negative, real block received by 
    processor $r$ in one of the first $q$ rounds. This block $b_r$ is called
    the \emph{baseblock} for processor $r$.
  \item
    \label{cor:recvbefore}
    The block that a processor sends in round $i$ with $k=i\bmod q$
    must be a block that has been received in some previous round, so
    either $\sendblock{k}_r=\recvblock{j}_r$ for some $j, 0\leq j<k$, or
    $\sendblock{k}_r=b_r-q$ where $b_r\geq 0$ is the \emph{baseblock} for
    processor $r$.
\end{enumerate}

The last correctness condition implies that $\sendblock{0}_r=b_r-q$ for
each processor. With receive and send schedules fulfilling these four
conditions, it is easy to see that Algorithm~\ref{alg:broadcast} correctly
broadcasts the $n$ blocks over the $p$ processors.

\begin{theorem}
  \label{thm:scheduleuse}
  Let $K,K>0$ be a number of communication phases each consisting of
  $q$ communication rounds for a total of $Kq$ rounds. Assume that in
  each round $i, 0\leq i<Kq$, each processor $r,0\leq r<p$ receives a
  block $\recvblock{i\bmod q}_r+\floor{i/q}q$
  and sends a block
  $\sendblock{i\bmod q}_r+\floor{i/q}q$.
  By the end of the $Kq$ rounds, processor $r$ will have received all blocks
  $\{0,1,\ldots,(K-1)q-1\}\cup \{b_r+(K-1)q\}$ where $b_r$ is the first
  (non-negative) block received by processor $r$.
\end{theorem}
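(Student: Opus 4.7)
The plan is induction on the number of phases $K$, with condition~\ref{cor:qblocks} doing the heavy lifting. Fix a processor $r$ and call phase $\ell$ ($0\leq\ell<K$) the block of rounds $\ell q, \ell q+1, \ldots, (\ell+1)q-1$. By the assumption of the theorem, the set of block indices that $r$ receives during phase $\ell$ is $R_\ell=\{\recvblock{k}_r+\ell q : 0\leq k<q\}$. Shifting the identity of condition~\ref{cor:qblocks} by $\ell q$ rewrites this as
\[
R_\ell = \bigl(\{(\ell-1)q,\,(\ell-1)q+1,\,\ldots,\,\ell q-1\}\setminus\{b_r+(\ell-1)q\}\bigr)\cup\{b_r+\ell q\}.
\]
The theorem then reduces to the set identity $\bigcup_{\ell=0}^{K-1}R_\ell=\{0,1,\ldots,(K-1)q-1\}\cup\{b_r+(K-1)q\}$.

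I would take as base case $K=1$, where $R_0=\{-q,\ldots,-1\}\setminus\{b_r-q\}\cup\{b_r\}$ contains $q-1$ negative ``virtual'' indices (which by Algorithm~\ref{alg:broadcast}'s conventions correspond to no actual communication) together with the single real baseblock $b_r$; this matches the claim since $\{0,\ldots,-1\}=\emptyset$. For the inductive step I would exploit a telescoping pattern: the block $b_r+(K-1)q$ \emph{excluded} from $R_K$ is exactly the exceptional block already \emph{present} in the accumulated union after $K$ phases, while the fresh range $\{(K-1)q,\ldots,Kq-1\}$ contributed by $R_K$ (together with that already-present exceptional block) fills in a new run of $q$ consecutive indices extending the initial range. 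Combining, the union after $K+1$ phases becomes $\{0,1,\ldots,Kq-1\}\cup\{b_r+Kq\}$, as required.

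The main subtlety I expect is keeping the bookkeeping at phase 0 clean, in particular confirming that the baseblock $b_r\in\{0,\ldots,q-1\}$ is indeed the sole real block received there and that the excluded index $b_r-q$ (virtual in any case) produces no anomaly when the telescoping starts; a careful split of $R_0$ into its virtual and real parts settles this. A secondary remark, not strictly part of the set identity but implicit in the theorem's use of the word ``receives'', is that the arriving data must be \emph{authentic} broadcast content. This follows once condition~\ref{cor:recvbefore} is invoked: a parallel induction on the round number shows that any block a processor ever transmits was received in a strictly earlier round, so validity propagates from the root along every directed send--receive chain defined by the circulant graph skips.
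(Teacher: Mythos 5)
Your proposal is correct and follows essentially the same route as the paper's own proof: induction on the number of phases, using Correctness Condition~\ref{cor:qblocks} to identify the blocks received in each phase as a shift of the base set, observing that the excluded block $b_r+(\ell-1)q$ was already received in the previous phase while the exceptional block $b_r+\ell q$ extends the range, and invoking Condition~\ref{cor:recvbefore} for availability of sent blocks. The only difference is presentational (your explicit set $R_\ell$ and telescoping union versus the paper's direct description of the last phase), not substantive.
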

\begin{proof}
  The proof is by induction on the number of phases. For $K=1$, there
  are $q$ rounds $i=0,1,\ldots,q-1$ over which each processor will
  receive its non-negative baseblock $b$; all other receive blocks are
  negative (Correctness Condition~\ref{cor:qblocks}). For $K>1$, in
  the last phase $K-1$, each processor will receive the blocks
  $(\{(K-2)q,(K-2)q+1,\ldots,(K-2)q+q-1\}\setminus\{b+(K-2)q\})\cup\{b+(K-1)q\}$
  since the set $\bigcup_{k=0}^{q-1}\{\recvblock{k}\}$ contains $q$
  different block indices, one of which is positive. The block
  $b+(K-2)q$ has been received in phase $K-2$ by the induction
  hypothesis, in its place block $b+(K-1)q$ is received. Therefore, at
  the end of phase $K-1$ using the induction hypothesis, blocks
  $\{0,1,\ldots,(K-2)q-1\}\cup \{(K-2)q,(K-2)q+1,\ldots, (K-2)q+q-1\}
  = \{0,1,\ldots, (K-1)q-1\}$ plus the block $b+(K-1)q$ have been
  received, as claimed. By Correctness Condition~\ref{cor:recvbefore},
  no block is sent that has not been received in a previous round or
  phase.
\end{proof}

In order to broadcast a given number of blocks $n$ in the optimal
number of rounds $n-1+q$, we use the smallest number of phases $K$
such that $Kq\geq n-1+q$, and introduce a number of dummy blocks
$x=Kq-(n-1+q)$ that do not have to be broadcast. In the $K$ phases,
all processors will have received the $n+x-1$ blocks
$0,1,\ldots,n+x-2$ plus one larger block. We perform $x$ initial,
virtual rounds with no communication to handle the $x$ dummy blocks,
and broadcast the actual, positive blocks in the following $n-1+q$
rounds. This is handled by subtracting $x$ from all computed block
indices. Recall that negative blocks are neither received nor sent by the
algorithm.  Blocks with index larger than $n-1$ in the last phase are
capped to $n-1$.

The symmetric communication pattern where each processor (node) $r$
has incoming (receive) edges $(f_r^k,r)$ and outgoing (send) edges
$(r,t^k_r)$ is a \emph{circulant graph} with skips (jumps)
$\skips{k}, k=0,1,\ldots, q-1$. The skips are computed in
Section~\ref{sec:circulant}.

By Theorem~\ref{thm:scheduleuse}, we make the following observations.

\begin{observation}
  \begin{enumerate}
  \item
    Algorithm~\ref{alg:broadcast} can be used to implement the
    \mpibcast operation for any number of processes and any problem
    size $m$. The data of size $m$ units is divided into $n$ blocks to
    give the shortest completion time (empirical tuning problem). With
    only $n=1$ block, the algorithm takes $q=\ceiling{\log_2 p}$
    communication rounds and is similar to a binomial tree.
  \item
    Since the communication pattern is symmetric and all processes do
    the structurally same communication operations in each round,
    simultaneous broadcast from all processes as roots can be done at
    the same time. For this, each processes computes send and receive
    schedules with each other process as root. In each send and receive
    operations, $p-1$ blocks corresponding to the $p-1$ other roots
    are concatenated and treated as one unit. Thus, the all-broadcast
    (all-to-all broadcast, allgather) problem where each process has
    $n$ blocks (possibly of different sizes) to broadcast can be
    solved with $n-1+q$ communication rounds. The algorithm is shown
    in the appendix.  By this, the difficult, irregular \mpiallgatherv
    operation can be implemented efficiently.
  \item
    By reversing the direction of all communication operations,
    working downwards from round $(n-1+q+x)-1$ to round $x$ and
    applying some given binary, associative and commutative reduction
    operator on each received block, the schedule gives a
    round-optimal solution to the reduction problem for commutative
    operators.  Each non-root process will send each partial result
    block exactly once, and the root will receive and reduce partial
    results for all blocks. This observation can be used to implement
    \mpireduce in the optimal number of rounds for any problem size
    $m$ and any number of processes.
  \item
    Simultaneous reduction to all processes as roots can likewise be
    done, giving round-optimal solutions for the all-reduction operations
    \mpireducescatterblock and the difficult, irregular
    \mpireducescatter. Each partial result block for all processes is sent
    and received once for a total volume of $p-1$ blocks sent and
    $p-1$ blocks received. For $n=1$ block, we believe that this is the
    first such algorithm with logarithmic number of rounds for any number
    of processors.  Previous algorithms use either a linear
    ring~\cite{ChanHeimlichPurkayasthavandeGeijn07,Traff10:largeallgat}
    or have almost twice the communication volume~\cite{Traff23:circulant}
    for certain number of processes $p$.
  \end{enumerate}
\end{observation}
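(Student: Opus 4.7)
The plan is to verify the four items of the observation in turn, treating them as consequences of Theorem~\ref{thm:scheduleuse} combined with two structural facts: (a) the broadcast schedule only depends on processor rank modulo a cyclic relabelling of the root, and (b) the entire communication pattern is a directed acyclic graph (DAG) once we annotate edges by rounds, so it can be time-reversed. I expect item~3 (the reduction claim) to carry the real content; items~1, 2 and 4 are then corollaries.

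For item~1, I would simply pick the smallest $K$ with $Kq\geq n-1+q$ and set $x=Kq-(n-1+q)$. Theorem~\ref{thm:scheduleuse} then guarantees that after $K$ phases all of blocks $\{0,\ldots,(K-1)q-1\}$ together with the single higher baseblock $b_r+(K-1)q$ are delivered to every processor $r$. Shifting all indices down by $x$ (equivalently, ignoring $x$ virtual initial rounds in which no block index is non-negative) and capping block indices at $n-1$ yields exactly the $n$ real blocks at every processor in the desired $n-1+q$ actual rounds. The special case $n=1$ gives $K=1$ and $q$ rounds, essentially a binomial broadcast on the circulant graph.

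For item~3, I would argue as follows. Fix a root $r=0$ and consider the labelled DAG $G$ whose vertices are pairs (processor, round) and whose edges encode the transfer of a specific block from $f_r^k$ to $r$ in round $i=\lfloor i/q\rfloor q+k$ prescribed by the schedule. Correctness Conditions~1, 2 and~4 together with Theorem~\ref{thm:scheduleuse} imply that for every block $b$ and every non-root processor $r$ there is exactly one incoming edge in $G$ labelled $b$ (the unique round in which $r$ first receives $b$), and that following these edges backwards in time yields a directed tree rooted at $0$ spanning all processors for that block. Reversing every edge of $G$ yields a DAG $G^{\mathrm{rev}}$ in which, for each block $b$, every non-root processor sends its (partial) contribution to $b$ exactly once, and the inversion of the per-block spanning tree routes all $p-1$ contributions into the root in the correct temporal order. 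Associativity and commutativity of the reduction operator let each processor combine incoming partial results with its own local value in arbitrary order along the way, so running Algorithm~\ref{alg:broadcast} with send/receive rôles swapped and rounds traversed from $n-1+q+x-1$ down to $x$ solves reduction in the same optimal round count. The capping of out-of-range block indices and the suppression of messages to the root translate symmetrically.

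Items~2 and~4 then follow by exploiting symmetry. Because the receive and send schedules for root $r'$ are obtained from those for root $0$ by the bijective relabelling $r\mapsto r-r'\bmod p$, and because the circulant skips $\skips{k}$ are root-independent, in round $i$ every processor has exactly one outgoing and one incoming circulant edge regardless of which root we consider. Hence we may superimpose all $p$ broadcasts (item~2), respectively all $p$ reductions (item~4), by concatenating the $p-1$ relevant blocks into a single compound message per round; the one-ported bidirectional model is respected, and the round count remains $n-1+q$. The volume accounting of item~4 is immediate from the fact that in $G^{\mathrm{rev}}$ every non-root processor contributes each block exactly once. The main obstacle is the formalisation in item~3: one must be careful that the reversal really preserves the chronological feasibility of each per-block tree (no processor is asked to send a partial result for a block before it has received all contributions it must combine) and that the cases of negative block indices and of edges to/from the root behave correctly under reversal; both amount to checking that Correctness Conditions~1--4 are self-dual under the time-and-direction reversal.
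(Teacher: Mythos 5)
Your proposal is correct and follows essentially the same route as the paper, which states this result as an unproved observation justified by Theorem~\ref{thm:scheduleuse}, the root-relabelling symmetry of the circulant pattern for items~2 and~4, and time-and-direction reversal for items~3 and~4. Your per-block spanning-tree/DAG-reversal formalisation of item~3 and the check that Conditions~1--4 are self-dual under reversal merely make explicit what the paper leaves implicit, and they are sound.
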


\begin{table}
  \caption{Receive and send schedules for $p=9$ processors with $q=4$.}
  \label{tab:p9}
  \begin{center}
  \begin{tabular}{cr|r|r|rr|rrrr|}
$r$ & 0 & 1 & 2 & 3 & 4 & 5 & 6 & 7 & 8 \\
$b$ & 4 & 0 & 1 & 2 & 0 & 3 & 0 & 1 & 2 \\
\toprule
$\recvblock{0}$: & -2 & 0 & -4 & -3 & -2 & -4 & -1 & -4 & -3 \\
$\recvblock{1}$: & -3 & -2 & 1 & -4 & -3 & -2 & -2 & -1 & -4 \\
$\recvblock{2}$: & -1 & -3 & -2 & 2 & 0 & -3 & -3 & -2 & -1 \\
$\recvblock{3}$: & -4 & -1 & -1 & -1 & -1 & 3 & 0 & 1 & 2 \\
\midrule
$\sendblock{0}$: & 0 & -4 & -3 & -2 & -4 & -1 & -4 & -3 & -2 \\
$\sendblock{1}$: & 1 & -4 & -3 & -2 & -2 & -1 & -4 & -3 & -2 \\
$\sendblock{2}$: & 2 & 0 & -3 & -3 & -2 & -1 & -1 & -3 & -2 \\
$\sendblock{3}$: & 3 & 0 & 1 & 2 & -4 & -1 & -1 & -1 & -1 \\
\bottomrule
  \end{tabular}
  \end{center}
\end{table}

\begin{table}
  \caption{Receive and send schedules for $p=18$ processors with $q=5$.}
  \label{tab:p18}
  \begin{center}
  \begin{tabular}{cr|r|r|rr|rrrr|rrrrrrrrr|}
$r$ & 0 & 1 & 2 & 3 & 4 & 5 & 6 & 7 & 8 & 9 & 10 & 11 & 12 & 13 & 14 & 15 & 16 & 17 \\
$b$ & 5 & 0 & 1 & 2 & 0 & 3 & 0 & 1 & 2 & 4 & 0 & 1 & 2 & 0 & 3 & 0 & 1 & 2 \\
\toprule
$\recvblock{0}$: & -3 & 0 & -5 & -4 & -3 & -5 & -2 & -5 & -4 & -3 & -1 & -5 & -4 & -3 & -5 & -2 & -5 & -4 \\
$\recvblock{1}$: & -4 & -3 & 1 & -5 & -4 & -3 & -3 & -2 & -5 & -4 & -3 & -1 & -5 & -4 & -3 & -3 & -2 & -5 \\
$\recvblock{2}$: & -2 & -4 & -3 & 2 & 0 & -4 & -4 & -3 & -2 & -2 & -4 & -3 & -1 & -1 & -4 & -4 & -3 & -2 \\
$\recvblock{3}$: & -5 & -2 & -2 & -2 & -2 & 3 & 0 & 1 & 2 & -5 & -2 & -2 & -2 & -2 & -1 & -1 & -1 & -1 \\
$\recvblock{4}$: & -1 & -1 & -1 & -1 & -1 & -1 & -1 & -1 & -1 & 4 & 0 & 1 & 2 & 0 & 3 & 0 & 1 & 2 \\
\midrule
$\sendblock{0}$: & 0 & -5 & -4 & -3 & -5 & -2 & -5 & -4 & -3 & -1 & -5 & -4 & -3 & -5 & -2 & -5 & -4 & -3 \\
$\sendblock{1}$: & 1 & -5 & -4 & -3 & -3 & -2 & -5 & -4 & -3 & -1 & -5 & -4 & -3 & -3 & -2 & -5 & -4 & -3 \\
$\sendblock{2}$: & 2 & 0 & -4 & -4 & -3 & -2 & -2 & -4 & -3 & -1 & -1 & -4 & -4 & -3 & -2 & -2 & -4 & -3 \\
$\sendblock{3}$: & 3 & 0 & 1 & 2 & -5 & -2 & -2 & -2 & -2 & -1 & -1 & -1 & -1 & -5 & -2 & -2 & -2 & -2 \\
$\sendblock{4}$: & 4 & 0 & 1 & 2 & 0 & 3 & 0 & 1 & 2 & -1 & -1 & -1 & -1 & -1 & -1 & -1 & -1 & -1 \\
\bottomrule
  \end{tabular}
  \end{center}
\end{table}

We can make an important, structural observation on receive
schedules that fulfill the four correctness conditions which helps to
understand the intuition of the broadcast algorithm described as
Algorithm~\ref{alg:broadcast} and the construction of receive and
send schedules to be developed in the next sections.

\begin{observation}
  \label{obs:double}
  Let a correct receive schedule for $p$ processors be given. A
  correct receive schedule for $2p$ processors can be constructed by
  the following modifications. First, for each processor $r, p\leq
  r<2p$, copy the receive schedule for processor $r-p$, such that
  $\recvblock{k}_r=\recvblock{k}_{r-p}$ for each round $k, 0\leq k<q$.
  Thus, the block that a processor would have received in one of the
  first $q$ rounds in the $p$ processor schedule is the same as it
  will now receive in the $2p$ processor schedule. Since $q$ is one
  larger in the $2p$ than in the $p$ processor schedule, subtract $1$
  from all the negative blocks in the receive schedules. Since $p$ is
  doubled, set $\skips{q+1}=2p$, all other $\skips{k}, 0\leq k\leq $
  will stay. The receive blocks $\recvblock{q}_r$ for all processors
  $r,0\leq r<2p$ are determined as follows.  For processor $r=p$, this
  shall be the new baseblock $q$ that has not been used so far, so
  $\recvblock{q}_p=q$. For all other large processors $r,p<r<2p$, 
  find the positive baseblock $b_r$ in $\recvblock{k}_r$, replace this
  with $-1$ (which will correspond to receiving baseblock $q$ received
  previously by processor $r$) and set $\recvblock{q}_r=b_r$.  For
  processors $r, 0\leq r<p$, we only need to set $\recvblock{q}_r=-1$.
\end{observation}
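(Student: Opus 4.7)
The plan is to verify that the constructed receive schedule, together with the extended skip sequence $\skips{0},\ldots,\skips{q-1},p$ (the notation $\skips{q+1}=2p$ in the statement serving as the sentinel one past the last actual skip), satisfies the four correctness conditions of Section~\ref{sec:broadcasts} for the doubled $2p$-processor system with new parameter $q'=q+1$; once this is done, Theorem~\ref{thm:scheduleuse} guarantees correct broadcasting on $2p$ processors.

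First I would verify Correctness Condition~\ref{cor:qblocks} by case analysis on three groups. For the original processors $0\le r<p$, shifting all negative entries by $-1$ turns $\{-1,\ldots,-q\}\setminus\{b_r-q\}$ into $\{-2,\ldots,-q-1\}\setminus\{b_r-q-1\}$; appending $\recvblock{q}_r=-1$ then yields exactly $(\{-1,\ldots,-q'\}\setminus\{b_r-q'\})\cup\{b_r\}$, preserving the baseblock. For $r=p$, the copied root schedule $\{-1,\ldots,-q\}$ shifts to $\{-2,\ldots,-q-1\}$, and appending $\recvblock{q}_p=q$ gives the correct union with new baseblock $b_p=q$ (noting $b_p-q'=-1$). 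For $p<r<2p$, swapping the positive entry $b_{r-p}$ with $-1$ and appending $\recvblock{q}_r=b_{r-p}$ produces the correct union with $b_r=b_{r-p}$.

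Conditions~1 and~2 become tautologies once the send schedule is defined from the receive schedule via $\sendblock{k}_r=\recvblock{k}_{t_r^k}$, so the substantive remaining task is Condition~\ref{cor:recvbefore}: every block a non-root processor sends in round $k$ must have been received in some round $j<k$, or equal $b_r-q'$. For $k<q$, each old skip $\skips{k}<p$ interpreted modulo $2p$ splits the doubled graph into two isomorphic copies of the original $p$-graph on $\{0,\ldots,p-1\}$ and $\{p,\ldots,2p-1\}$; since each upper processor $r+p$ copies the (shifted) receive schedule of its lower counterpart $r$, every pair (sent block, destination) in rounds $0,\ldots,q-1$ matches the corresponding pair in the $p$-schedule, and the inductive correctness of the $p$-schedule gives Condition~\ref{cor:recvbefore} on both halves. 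The main obstacle is the genuinely new round $k=q$: the skip $p$ pairs each $r<p$ with $r+p$, each non-root lower processor $r>0$ transmits its own old baseblock $b_r$, which by Condition~\ref{cor:qblocks} applied to the original $p$-schedule was received in some round $j<q$, and the root transmits the new baseblock $q$, which is admissible since the root is the source. With all four conditions verified, the doubled receive schedule is correct.
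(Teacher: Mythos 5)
The paper offers no proof of this observation (it is checked only on the $p=9\rightarrow18$ example in Tables~\ref{tab:p9} and~\ref{tab:p18}), so your attempt is necessarily independent; your verification of Condition~\ref{cor:qblocks} by the three-way case split is correct, as is dispatching Conditions~1 and~2 via $\sendblock{k}_r=\recvblock{k}_{t_r^k}$. The gap is in Condition~\ref{cor:recvbefore}. Your key structural claim --- that a skip $\skips{k}<p$ ``splits the doubled graph into two isomorphic copies of the original $p$-graph'' on the lower and upper halves --- is false: whenever $r<\skips{k}$ the from-processor $(r-\skips{k})\bmod 2p=r-\skips{k}+2p$ lies in the \emph{other} half (processor $0$, for instance, always receives from $2p-\skips{k}$), so the wrap-around edges cross between the halves. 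What is true is only that reduction modulo $p$ maps the $2p$-circulant onto the $p$-circulant, and one must then invoke the (near-)invariance of the schedule under $r\mapsto r+p$ to transfer Condition~\ref{cor:recvbefore}.

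That invariance is exactly what fails to be exact, and this is the more substantive gap: an upper processor $r+p$ does \emph{not} carry a verbatim shifted copy of $r$'s schedule --- its positive baseblock entry has been moved to round $q$ and replaced by $-1$. Hence its set of blocks held by round $k$ differs from $r$'s at precisely the old baseblock round, and your claim that ``every pair (sent block, destination) in rounds $0,\ldots,q-1$ matches the corresponding pair in the $p$-schedule'' does not follow as stated. To close this you must argue separately (i) that an upper processor is never required to forward a positive block in a round $k<q$ --- which holds because a positive baseblock travels only along the non-wrapping canonical path from $0$ to $t$ inside the lower half (the sender $t-\skips{e}$ of $t$'s baseblock satisfies $0\leq t-\skips{e}<p$), so the sender of a positive block is always a lower processor --- and (ii) that every send of the relocated block $-1$ is covered, using Condition~\ref{cor:recvbefore} of the $p$-processor schedule for upper-half targets and noting that $b_p-(q+1)=-1$ covers sends originating at processor $p$. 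Finally, in round $q$ you treat only the lower-to-upper direction; the upper-to-lower sends (processor $r'$ with $p\leq r'<2p$ sends block $\recvblock{q}_{r'-p}=-1$ to $r'-p$) also need the observation that $r'$ acquired $-1$ at its old baseblock round, which precedes round $q$.
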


Table~\ref{tab:p9} and Table~\ref{tab:p18} show receive and send
schedules for $p=9$ and $p=18$ processors. The doubling of $p$ from
$9$ to $18$ gives rise to the shown schedule by following
Observation~\ref{obs:double} as can readily be verified. If the
receive schedule for $p=9$ is correct, so is obviously and by
construction the schedule for $p=18$. The observation shows that
receive schedules where $p=2^q$ is a power of two indeed exist and can
readily be computed. Extending a $p$ processor receive schedule to a
$2p-1$ processor receive schedule is not as obvious. Had it been, it
would be easy to construct $p$ processor receive schedules for any
$p$.

\subsection{The communication pattern}
\label{sec:circulant}

\begin{algorithm}
  \caption{Computing the skips (jumps) for a $p$-processor circulant
    graph ($q=\ceiling{\log_2 p}$).}
  \label{alg:circulants}
\begin{algorithmic}
  \State $k\gets q$
  \State $\skips{k}\gets p$
  \While{$k>0$}
  \State $\skips{k-1}\gets \skips{k}-\skips{k}/2$
  \State $k\gets k-1$   \Comment Now $\skips{k}=\ceiling{\skips{k+1}/2}$
  \EndWhile
\end{algorithmic}
\end{algorithm}

The skips for the circulant graph pattern are computed by repeated
halving with rounding up of $p$ and shown as
Algorithm~\ref{alg:circulants}.  This obviously takes
$q=\ceiling{\log_2 p}$ iterations. We observe that for any $p,p>1$ it
is always the case that $\skips{0}=1$ and $\skips{1}=2$.  For
convenience, we take $\skips{q}=p$.  Thus, in the regular, circulant
graph, each processor has $q$ incoming and $q$ outgoing neighbor
processors. We make some simple observations needed for the following
computations of the send and receive schedules.

\begin{observation}
  \label{obs:skips}
  For each $k, 0\leq k<q$ it holds that
  $\skips{k+1}\leq\skips{k}+\skips{k}\leq\skips{k+1}+1$
\end{observation}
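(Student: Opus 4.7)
The plan is to unfold the recurrence in Algorithm~\ref{alg:circulants} and do a two-case parity argument. First I would observe that the assignment $\skips{k-1}\gets\skips{k}-\skips{k}/2$, read with integer (floor) division as suggested by the comment $\skips{k}=\ceiling{\skips{k+1}/2}$, gives the closed form
\[
\skips{k} \;=\; \skips{k+1} - \bigl\lfloor \skips{k+1}/2 \bigr\rfloor \;=\; \bigl\lceil \skips{k+1}/2 \bigr\rceil
\]
for every $k$, $0\leq k<q$. This identity is the only thing one needs from the algorithm; everything else is arithmetic.

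Next I would split on the parity of $\skips{k+1}$. If $\skips{k+1}$ is even, then $\skips{k}=\skips{k+1}/2$, so $2\skips{k}=\skips{k+1}$, and both inequalities $\skips{k+1}\leq 2\skips{k}$ and $2\skips{k}\leq\skips{k+1}+1$ hold (the first with equality). If $\skips{k+1}$ is odd, then $\skips{k}=(\skips{k+1}+1)/2$, so $2\skips{k}=\skips{k+1}+1$, and again both inequalities hold (this time the second with equality). In either case the claim is established for the chosen $k$.

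Since $k$ was arbitrary in $\{0,1,\ldots,q-1\}$, the observation follows. There is no real obstacle here; the only subtlety is making sure the integer-division convention in Algorithm~\ref{alg:circulants} is fixed at the outset so that $\skips{k}=\lceil \skips{k+1}/2\rceil$, after which the parity split is immediate and there is nothing else to verify.
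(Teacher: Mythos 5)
Your proof is correct and is exactly the argument the paper intends: the paper simply asserts that the observation ``follows directly from the halving scheme,'' and your parity split on $\skips{k+1}$ using $\skips{k}=\lceil\skips{k+1}/2\rceil$ is the spelled-out version of that one-line justification.
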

This follows directly from the halving scheme of
Algorithm~\ref{alg:circulants}.

\begin{lemma}
  \label{lem:bounds}
  For any $k, k<q$ it holds that
  \begin{displaymath}
    \skips{k+1}-1\leq \sum_{i=0}^k\skips{i}<\skips{k+1}+k
  \end{displaymath}
\end{lemma}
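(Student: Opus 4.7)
The plan is a straightforward induction on $k$, using Observation~\ref{obs:skips} as the engine to push the two inequalities through each step. Since Observation~\ref{obs:skips} gives the tight two-sided relation $\skips{k+1} \leq 2\skips{k} \leq \skips{k+1}+1$, both bounds of the lemma are essentially calibrated to it, which suggests induction will go through without slack to spare.

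For the base case $k=0$, I would appeal to the fact, noted just before Observation~\ref{obs:skips}, that $\skips{0}=1$ and $\skips{1}=2$ for any $p>1$. Then $\sum_{i=0}^{0}\skips{i}=1$, and indeed $\skips{1}-1 = 1 \leq 1 < 2 = \skips{1}+0$, so both inequalities hold.

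For the inductive step, assume $\skips{k+1}-1 \leq \sum_{i=0}^{k}\skips{i} < \skips{k+1}+k$ and consider $\sum_{i=0}^{k+1}\skips{i} = \skips{k+1} + \sum_{i=0}^{k}\skips{i}$. For the lower bound I would add $\skips{k+1}$ to the inductive lower bound to get $\sum_{i=0}^{k+1}\skips{i} \geq 2\skips{k+1}-1$, and then apply the left half of Observation~\ref{obs:skips} (i.e., $\skips{k+2}\leq 2\skips{k+1}$) to conclude $\sum_{i=0}^{k+1}\skips{i} \geq \skips{k+2}-1$. For the upper bound, adding $\skips{k+1}$ to the inductive upper bound gives $\sum_{i=0}^{k+1}\skips{i} < 2\skips{k+1}+k$, and then the right half of Observation~\ref{obs:skips} (i.e., $2\skips{k+1}\leq \skips{k+2}+1$) yields $\sum_{i=0}^{k+1}\skips{i} < \skips{k+2}+k+1$, completing the induction.

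There is no real obstacle here; the only thing to watch is that both halves of Observation~\ref{obs:skips} are needed (one for each bound), and that the additive slack $+k$ in the upper bound grows by exactly $1$ per step, which matches precisely the $+1$ that Observation~\ref{obs:skips} gives in the worst case from the ceiling rounding. This alignment is what makes the stated bounds tight and the induction clean.
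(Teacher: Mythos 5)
Your proposal is correct and follows essentially the same argument as the paper: induction on $k$ with base case $\skips{0}=1$, $\skips{1}=2$, and an inductive step that adds $\skips{k+1}$ to both inductive bounds and then applies the two halves of Observation~\ref{obs:skips} (at index $k+1$) to convert $2\,\skips{k+1}$ into $\skips{k+2}$ plus or minus the appropriate slack. No differences worth noting.
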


\begin{proof}
  The prof is by induction on $k$. For $k=0$, the claim holds since
  $\skips{0}=1$ and $\skips{1}=2$. Assume the claim holds for $k$. By
  the induction hypothesis and Observation~\ref{obs:skips}, we get
  \begin{displaymath}
    \sum_{i=0}^{k+1}\skips{i} = \skips{k+1}+\sum_{i=0}^{k}\skips{i} <
    \skips{k+1}+\skips{k+1}+k\leq \skips{k+2}+k+1
  \end{displaymath}
    and
  \begin{displaymath}
    \sum_{i=0}^{k+1}\skips{i} = \skips{k+1}+\sum_{i=0}^{k}\skips{i} \geq
    \skips{k+1}+\skips{k+1}-1\geq \skips{k+2}-1 \quad .
  \end{displaymath}  
\end{proof}

\begin{observation}
  \label{obs:tunnel}
  For any $p$, there are at most two $k, k>1$ such that
  $\skips{k-2}+\skips{k-1}=\skips{k}$.
\end{observation}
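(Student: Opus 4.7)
The plan is to characterize exactly which $k>1$ can satisfy $\skips{k-2}+\skips{k-1}=\skips{k}$ by a short case analysis driven by Observation~\ref{obs:skips}, and then to use strict monotonicity of the skip sequence to argue that the only possible candidates are $k=2$ and $k=3$.

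First I would restate Observation~\ref{obs:skips} in the form most convenient for substitution: for every $k$, either $\skips{k}=2\skips{k-1}$ or $\skips{k}=2\skips{k-1}-1$; equivalently $\skips{k-1}=\ceiling{\skips{k}/2}$, which is exactly how the skips are generated by Algorithm~\ref{alg:circulants}. I then split on which of the two alternatives holds at index $k$.

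In the even case $\skips{k}=2\skips{k-1}$, the equation $\skips{k-2}+\skips{k-1}=\skips{k}$ reduces to $\skips{k-2}=\skips{k-1}$; combined with $\skips{k-2}=\ceiling{\skips{k-1}/2}$ this forces $\skips{k-1}\leq 1$, which by $\skips{0}=1$ and $\skips{1}=2$ can only happen at $k-1=0$, i.e., $k=1$, excluded by hypothesis. In the odd case $\skips{k}=2\skips{k-1}-1$, the equation reduces to $\skips{k-2}=\skips{k-1}-1$; combining with $\skips{k-2}=\ceiling{\skips{k-1}/2}$ gives the integer equation $\ceiling{x/2}=x-1$, whose only solutions are $x=2$ (even branch) and $x=3$ (odd branch). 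Hence $\skips{k-1}\in\{2,3\}$.

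I would finish by invoking strict monotonicity of the skip sequence. From $\skips{0}=1$, $\skips{1}=2$ and the lower bound $\skips{k}\geq 2\skips{k-1}-1>\skips{k-1}$ whenever $\skips{k-1}\geq 2$, the skips are strictly increasing, so the value $2$ occurs only at index $1$ and the value $3$ can occur at most once, necessarily at index $2$ (since $\skips{2}\geq 2\skips{1}-1=3$ rules out smaller indices). Therefore $\skips{k-1}=2$ forces $k=2$ and $\skips{k-1}=3$ forces $k=3$, giving at most two admissible values of $k$. The main obstacle is simply to be careful in the case analysis: one must verify that $\ceiling{x/2}=x-1$ really has no integer solutions outside $\{2,3\}$ (immediate by parity), and that the strict monotonicity argument rules out additional occurrences of the critical values $2$ and $3$ deeper in the sequence.
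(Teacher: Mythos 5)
Your proof is correct, and it reaches the paper's conclusion (only $k=2$ and $k=3$ are possible, occurring exactly when $\skips{2}=3$ resp.\ $\skips{3}=5$) by a somewhat different and more systematic route. The paper verifies the two small cases explicitly and then disposes of $k>3$ by a growth argument, observing that $\skips{2}\geq 3$, $\skips{3}\geq 5$, $\skips{4}\geq 9$ and asserting that consequently $\skips{k-2}+\skips{k-1}<\skips{k}$ for all $k>3$; that last inequality is stated rather than derived in detail. You instead exploit the exact halving recurrence $\skips{k-1}=\ceiling{\skips{k}/2}$ to reduce the equality $\skips{k-2}+\skips{k-1}=\skips{k}$ to the integer equation $\ceiling{x/2}=x$ (even branch, forcing $x\leq 1$, hence $k=1$, excluded) or $\ceiling{x/2}=x-1$ (odd branch, forcing $x=\skips{k-1}\in\{2,3\}$), and then use strict monotonicity of the skips from index $1$ onward to pin these values to indices $1$ and $2$, i.e., $k\in\{2,3\}$. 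What your approach buys is a complete, self-contained characterization that needs no separate asymptotic bound for large $k$ and makes the "and none other" claim of the paper immediate; what the paper's version buys is brevity and a direct link to the concrete values $\skips{2}=3$, $\skips{3}=5$ that are used later when discussing canonical skip sequences. Both arguments rest on the same underlying fact that the defining equality can only be satisfied by the very smallest skips.
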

For $\skips{2}=3$, Algorithm~\ref{alg:circulants} gives $\skips{1}=2$
and $\skips{0}=1$ for which the observation holds. For $\skips{3}=5$,
Algorithm~\ref{alg:circulants} gives $\skips{2}=3$ and $\skips{1}=2$
for which the observation holds. Any $p$ for which $\skips{2}=3$ or
$\skips{3}=5$ will have this property, and none other. We see for
$p\geq 3$ that
$\skips{2}\geq 3$, for $p>4$ that
$\skips{3}\geq 5$ and for $p>7$ that $\skips{4}\geq 9$, and therefore
$\skips{k-2}+\skips{k-1}<\skips{k}$ for $k>3$ for all $p$.

\begin{observation}
  \label{obs:cut}
  For some $p$ and $k>0$, there is an $r, r<\skips{k}$ with
  $r+\skips{k}=\skips{k+1}$.
\end{observation}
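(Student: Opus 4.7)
The plan is to construct the candidate $r$ explicitly and then verify both the equality $r+\skips{k}=\skips{k+1}$ and the strict inequality $r<\skips{k}$ by appealing to the halving recurrence of Algorithm~\ref{alg:circulants} and to Observation~\ref{obs:skips}.

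The natural candidate is $r=\skips{k+1}-\skips{k}$, which makes the equation $r+\skips{k}=\skips{k+1}$ hold by definition. The first thing to check is that $r\geq 0$, which follows immediately from $\skips{k+1}\geq\skips{k}$ (an easy consequence of $\skips{k}=\ceiling{\skips{k+1}/2}$ together with $\skips{k+1}\geq 2$ for $k\geq 0$). The upper bound $r\leq\skips{k}$ is exactly the right half of Observation~\ref{obs:skips}, which says $\skips{k+1}\leq 2\skips{k}$.

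The heart of the argument is the strict inequality $r<\skips{k}$, i.e.\ $\skips{k+1}<2\skips{k}$. Since $\skips{k}=\ceiling{\skips{k+1}/2}$, one has $2\skips{k}=\skips{k+1}$ exactly when $\skips{k+1}$ is even, and $2\skips{k}=\skips{k+1}+1$ exactly when $\skips{k+1}$ is odd. So $r<\skips{k}$ holds precisely when $\skips{k+1}$ is odd, in which case $r=\skips{k}-1$. Because the observation is an existence statement (``for some $p$ and $k>0$''), it suffices to exhibit a single instance: for example, $p=3$ gives $\skips{2}=3$, $\skips{1}=2$, so $r=1<2=\skips{1}$ and $r+\skips{1}=\skips{2}$; similarly $p=17$ yields skips $1,2,3,5,9,17$, and at $k=3$ we have $r=4<5=\skips{3}$ with $r+\skips{3}=\skips{4}$.

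No step here looks to be a real obstacle: the whole proof reduces to a case split on the parity of $\skips{k+1}$, combined with the already-proved halving inequality. The only subtlety worth stressing is that the claim is genuinely conditional on $p$ (it fails whenever every halving step is exact, as for $p=2^q$), so the proof must supply an explicit witness rather than proceed for ``all $p$'', which is why I would include the small examples above.
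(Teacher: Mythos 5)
Your proof is correct and is essentially the paper's own argument: the paper's entire justification is ``if $\skips{k+1}$ is odd, $r=\skips{k+1}-\skips{k}$ fulfills the observation,'' which is exactly your candidate and your parity case split. Your added explicit witnesses (e.g.\ $p=3$, $p=17$) are a reasonable elaboration of the same idea, since the statement is existential.
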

If $\skips{k+1}$ is odd, $r=\skips{k+1}-\skips{k}$ fulfills the
observation.


\begin{lemma}
  \label{lem:canonical}
  For any $r,0\leq r <p$ there is a (possibly empty) sequence
  $[e_0,e_1,\ldots,e_{j-1}]$ of $j, j<q$, different skip indices such
  that $r=\sum_{i=0}^{j-1} \skips{e_i}$.
\end{lemma}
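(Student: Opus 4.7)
The plan is a greedy (Zeckendorf-like) construction. Given $r$ with $0\leq r<p$, the procedure would be: if $r=0$, output the empty sequence; otherwise let $k$ be the largest index with $\skips{k}\leq r$, emit $k$, and recurse on $r-\skips{k}$.

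The crux of the argument will be a single one-step inequality showing that after a greedy step, the residue is strictly smaller than the skip just chosen. By maximality of $k$, and using the convention $\skips{q}=p$ together with $r<p$, we have $r<\skips{k+1}$. Observation~\ref{obs:skips} then gives $\skips{k+1}\leq 2\skips{k}$, so
\begin{displaymath}
  r-\skips{k} \;<\; \skips{k+1}-\skips{k} \;\leq\; \skips{k} \quad .
\end{displaymath}

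From this the remaining properties follow quickly. The residue strictly decreases and stays non-negative, so the procedure terminates. Since the next greedy choice operates on a residue smaller than $\skips{k}$, its index must be some $k'<k$; hence the emitted sequence $e_0>e_1>\cdots>e_{j-1}$ is strictly decreasing inside $\{0,1,\ldots,q-1\}$, which simultaneously establishes distinctness and the count bound $j\leq q$. The main (minor) obstacle is to sharpen this to the stated strict bound $j<q$. I would argue that $j=q$ forces the greedy to select every skip, hence $r=\sum_{i=0}^{q-1}\skips{i}$; by Lemma~\ref{lem:bounds} this sum lies in $[p-1,p+q-1)$, so together with $r<p$ the situation $j=q$ can occur only in the borderline case where the sum equals $p-1$ exactly and $r=p-1$. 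I expect this residual case to be the only real subtlety, and it can be handled by appealing to Observation~\ref{obs:tunnel} to collapse a pair of consecutive small skips into a single larger skip and so shorten the greedy representation, or by treating that single value of $r$ by direct inspection.
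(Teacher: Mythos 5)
Your core argument is essentially the paper's own proof in different clothing: the paper proves the lemma by induction on $k$ over the ranges $[0,\skips{k})$, peeling off $\skips{k}$ when $\skips{k}\leq r<\skips{k+1}$ and observing that the residue satisfies $0\leq r-\skips{k}<\skips{k+1}-\skips{k}\leq\skips{k}$ by Observation~\ref{obs:skips}; your greedy descent chooses the same index at each step and rests on the identical one-step inequality, so the decomposition produced is the same (it is exactly the canonical skip sequence computed by Algorithm~\ref{alg:baseblock}).

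The one place you diverge is the bound $j<q$, and here your instinct that something is off is correct, but your proposed repair does not work. You correctly isolate the borderline case $r=p-1$ with $\sum_{i=0}^{q-1}\skips{i}=p-1$, but for $p=2^q$ (e.g.\ $p=4$, $r=3=\skips{0}+\skips{1}$) the decomposition into distinct skips is unique, Observation~\ref{obs:tunnel} gives no collapsible pair since $\skips{0}+\skips{1}=3\neq 4=\skips{2}$, and direct inspection confirms that all $q$ skips are genuinely needed. So $j=q$ does occur and the strict bound $j<q$ as literally stated is not attainable there; the lemma should read $j\leq q$. The paper's own proof silently has the same issue --- it only establishes that the indices are distinct and lie in $\{0,\ldots,q-1\}$, hence $j\leq q$ --- and nothing downstream uses strictness, so this is an off-by-one in the statement rather than a gap in your argument. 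Apart from spending effort on a repair that cannot succeed, your proof is fine and matches the paper's.
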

We call a (possibly empty) sequence $[e_0,e_1,\ldots,e_{j-1}]$ for
which $r=\sum_{i=0}^{j-1} \skips{e_i}$ and where
$e_0<e_1<\ldots<e_{j-1}$ a \emph{skip sequence} for $r$.
A non-empty skip sequence $[e_0,e_1,\ldots,e_{j-1}]$ for $r$ defines a
path from processor $0$ to processor $r>0$.
We will use the terms skip sequence and path interchangeably.

\begin{proof}
  The proof is by induction on $k$. When $r=0$ the claim holds for the
  empty sequence. Assuming the claim holds for any $r, 0\leq
  r<\skips{k}$, we show that it holds for $0\leq r<\skips{k+1}$. If
  already $0\leq r<\skips{k}$ the claim holds by assumption.  If
  $\skips{k}\leq r <\skips{k+1}$, then $0\leq
  r-\skips{k}<\skips{k+1}-\skips{k}\leq\skips{k}$ by
  Observation~\ref{obs:skips}. By the induction hypothesis, there is a
  sequence of different skips not including $\skips{k}$ and summing to
  $r-\skips{k}$, and $\skips{k}$ can be appended to this sequence to
  sum to $r$.
\end{proof}

The lemma indicates how to recursively compute in $O(q)$ steps a
specific, \emph{canonical skip sequence} for any $r, 0\leq r<p$. By
Observation~\ref{obs:tunnel} and Observation~\ref{obs:cut}, for some
$p$ there may be more than one skip sequence for some $r$; the
decomposition of $r$ into a sum of different skips is not unique for
all $p$ (actually, the decomposition is unique only when $p$ is a
power of $2$).  A canonical skip sequence will contain $\skips{k}$ and
not $\skips{k-2}$ and $\skips{k-1}$ if
$\skips{k-2}+\skips{k-1}=\skips{k}$ (Observation~\ref{obs:tunnel}),
and $\skips{k+1}$ instead of $\skips{k}$ if $r+\skips{k}=\skips{k+1}$
(Observation~\ref{obs:cut}).

\begin{algorithm}
  \caption{Finding the baseblock for processor $r,0\leq r<p$.}
  \label{alg:baseblock}
\begin{algorithmic}
  \Function{baseblock}{$r$}
  \State $k, r'\gets q, 0$
  \Repeat
  \State $k\gets k-1$
  \If{$r'+\skips{k}=r$} \Return{$k$}
  \ElsIf{$r'+\skips{k}<r$} $r'\gets r'+\skips{k}$
  \EndIf
  \Until{$k=0$}
  \State \Return{$q$} \Comment Only processor $r=0$ will return $q$ as baseblock
  \EndFunction
\end{algorithmic}
\end{algorithm}

The canonical skip sequence for an $r,0\leq r<p$ is implicitly
computed iteratively by the \textsc{baseblock()} function of
Algorithm~\ref{alg:baseblock} which explicitly returns the first
(smallest) skip index in the canonical skip sequence.  This index will
be the \emph{baseblock} $b_r$ for $r$. Baseblocks are essential for
the broadcast schedules. For convenience, we define $q$ to be the
baseblock of $r=0$ for which the skip sequence is otherwise empty; for
other $r>0$ the baseblock $b_r$ satisfies $0\leq b_r<q$ and is a legal
skip index.

If we let the root processor $r=0$ send out blocks one after the other
(that is, $\sendblock{k}_0=k$ for $k=0,1,\ldots,q-1$) to processors
$\skips{k}$, the canonical skip sequence to any other processor $r,
r>0$ gives a path through which the baseblock $b_r$ can be sent from
the root to processor $r$. Processor $r$ will receive its baseblock in
communication round $e$ where $e$ is the last, largest skip index in
the canonical skip sequence for $r$, therefore we can set
$\recvblock{e}_r=b_r$.

The distribution of baseblocks for processors $r=0,1,2,\ldots,p-1$ follows
a somewhat, and for $p=2^q$ particularly regular pattern. The next lemma
shows that for any consecutive sequence of processors, there are enough
corresponding baseblocks that any processor can in $k$ communication rounds
receive $k$ different baseblocks when using the circulant graph communication
pattern.

\begin{lemma}
  \label{lem:kplus1}
  Any consecutive sequence of baseblocks $b_r,b_{r+1},\ldots
  b_{r+\skips{k}-1}$ of length $\skips{k}$ for processors
  $r,r+1,\ldots, r+\skips{k}-1$ has at least $k+1$ different
  baseblocks.
\end{lemma}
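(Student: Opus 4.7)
The plan is to prove Lemma~\ref{lem:kplus1} by induction on $k$. For the base case $k = 0$, a range of length $\skips{0} = 1$ contains a single processor, hence a single baseblock, which already gives $\geq 1 = k+1$ distinct value.

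For the inductive step, assume the statement for $k - 1$ and consider a range $R = [r, r + \skips{k})$. By Observation~\ref{obs:skips}, $\skips{k}$ equals either $2\skips{k-1}$ or $2\skips{k-1} - 1$, so in both cases $R$ can be decomposed into two sub-ranges $R_1 = [r, r + \skips{k-1})$ and $R_2 = [r + \skips{k} - \skips{k-1}, r + \skips{k})$ of length $\skips{k-1}$ (disjoint in the even case, overlapping in one processor in the odd case). Applying the induction hypothesis to each sub-range yields at least $k$ distinct baseblocks per sub-range, so the union already contains $\geq k$ distinct values; the remaining step is to produce one further baseblock.

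To exhibit this missing baseblock, I would identify a \emph{distinguishing pivot} processor $s^\star \in R$ whose canonical skip sequence (Lemma~\ref{lem:canonical}) is forced to begin at a skip index strictly larger than the smallest skip index appearing in the canonical decomposition of every processor in the opposite sub-range. A natural candidate is the boundary processor $s^\star = r + \skips{k-1}$: adding $\skips{k-1}$ to a processor whose canonical decomposition avoids $\skips{k-1}$ introduces a new smallest skip index in the resulting canonical decomposition, giving a baseblock that provably cannot appear in the opposite half. The bounds of Lemma~\ref{lem:bounds} are used to keep the candidate pivot inside $R$, and the greedy structure of Algorithm~\ref{alg:baseblock} controls how the canonical decomposition changes between neighbouring indices.

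The main obstacle I foresee is the degenerate configurations singled out by Observation~\ref{obs:tunnel} ($\skips{k-2} + \skips{k-1} = \skips{k}$) and Observation~\ref{obs:cut} (an $r < \skips{k}$ with $r + \skips{k} = \skips{k+1}$). Precisely in these cases the canonical skip decomposition is not unique, and the greedy choice of Algorithm~\ref{alg:baseblock} can collapse two baseblocks that the pivot argument above would otherwise separate. Handling them requires a dedicated case analysis in which the pivot is re-routed---either shifted by $\skips{0} = 1$ to side-step the tunnel, or re-expressed via the identity of Observation~\ref{obs:cut}---so that a baseblock outside the opposite sub-range's baseblock set is still produced, yielding the claimed $(k+1)$-th distinct value.
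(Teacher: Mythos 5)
There is a genuine gap: the inductive step never actually produces the $(k+1)$-st distinct baseblock. Your two sub-ranges $R_1$ and $R_2$ each contribute at least $k$ distinct values, but their union is only guaranteed to have $k$ distinct values, so everything hinges on the pivot argument --- and the concrete pivot you propose, $s^\star=r+\skips{k-1}$, does not work, even away from the degenerate cases of Observations~\ref{obs:tunnel} and~\ref{obs:cut}. Take the paper's own example with $\skips{}=1,2,3,6,11$, whose baseblock sequence is $4\,0\,1\,2\,0\,1\,3\,0\,1\,2\,0$. For $k=3$ and the window starting at $r=1$ (baseblocks $0,1,2,0,1,3$), your pivot is processor $4$ with baseblock $0$, which already occurs in $R_1=\{1,2,3\}$; the genuinely new value is the $3$ at processor $6$, which your argument never locates. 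The underlying reason is that the baseblock is the \emph{smallest} index in a canonical decomposition computed greedily from the \emph{largest} skip down, so the decomposition of $r+\skips{k-1}$ bears no simple relation to that of $r$; ``adding $\skips{k-1}$ introduces a new smallest skip index'' is not a valid inference. Deferring the tunnel/cut cases to ``a dedicated case analysis'' does not rescue this, since the failure above is not one of those cases.

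The paper's proof supplies exactly the structural fact you are missing. It first shows, via an explicit self-similar construction of the entire baseblock sequence (append the length-$\skips{k}$ prefix to itself, possibly dropping the last symbol, and bump position $0$ to $k+1$), that \emph{every} consecutive window contains a unique largest baseblock occurring exactly once. The induction then pivots on that processor $r^\star$, not on a fixed boundary index: inside the length-$\skips{k}$ window one takes a sub-window of length $\skips{k-1}+1$ starting or ending at $r^\star$, applies the induction hypothesis to the $\skips{k-1}$ processors other than $r^\star$ to get $k$ distinct baseblocks, and observes that none of them can equal the unique maximal baseblock of $r^\star$, giving $k+1$. If you want to salvage your outline, you would need to prove that uniqueness-of-the-maximum property (or an equivalent separator), which is the real content of the lemma; the two-halves decomposition by itself cannot yield it.
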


\begin{proof}
  The proof is by induction on $k$. For this we list the baseblocks in
  sequence for all processors $r,0\leq r<p$ which can be done in
  $O(p)$ steps as follows. At first, only the root processor $r=0$ has
  baseblock $0$. Assume that the baseblocks for processors
  $0,1,\ldots,\skips{k}-1$ have been listed.  Append this list to
  itself but possibly remove the last element, giving now a list of
  $\skips{k+1}$ baseblocks. Increment the baseblock for processor
  $r=0$ to $k+1$. Example for $\skips{0}=1$, $\skips{1}=2$,
  $\skips{2}=3$, $\skips{3}=6$, $\skips{4}=11$: $0\rightarrow 10
  \rightarrow 201\rightarrow 301201\rightarrow 40120130120$. As can be
  seen, each sequence of $\skips{k}$ baseblocks, starting from $r=0$
  has exactly $k+1$ different baseblocks. Furthermore, any sequence
  starting from $r=\skips{k}$ has likewise $k+1$ different baseblocks,
  even when one block is missing from the end when
  $\skips{k}+\skips{k}=\skips{k+1}+1$.  From the construction it can
  also be seen that in any sequence of baseblocks for a consecutive
  sequence of processors, there is a unique, largest baseblock
  appearing once. Now consider a sequence of $\skips{k}$
  baseblocks. Find the unique, largest baseblock corresponding to some
  processor $r$. Within the $\skips{k}$ length sequence, there is a
  consecutive, shorter sequence of length $\skips{k-1}+1$ either
  starting or ending at $r$. By the induction hypothesis, the sequence
  of length $\skips{k-1}$ has at least $k$ different baseblocks and
  does not include the baseblock of processor $r$. In total, there are
  therefore at least $k+1$ different baseblocks.
\end{proof}

The way the $p$ baseblocks are listed correspond to what
Algorithm~\ref{alg:baseblock} is doing for each individual processor
$r$. The proof shows that all baseblocks can be listed more
efficiently, in linear time, than by application of
\textsc{baseblock} for each $r$. This is relevant when the schedule
computations are used for all-broadcast and all-reduction.

\subsection{The receive schedule}
\label{sec:recvschedule}

\begin{algorithm}
  \caption{Computing receive blocks for processor $r, p\leq r<2p$ by
    depth-first search with removal of accepted skip indices.}
  \label{alg:dfsblocks}
  \begin{algorithmic}
    \Function{allblocks}{$r,r',s,e,k,\recvblock{q}$}
    \While{$e\neq -1$}
    \If{$r'+\skips{e}\leq r-\skips{k}\wedge r'+\skips{e}<s$}
    \If{$r'+\skips{e}\leq r-\skips{k+1}$}
    \State $k\gets \Call{allblocks}{r,r'+\skips{e},s,e,k,\recvblock{}}$
    \EndIf
    \If{$r'>r-\skips{k+1}$} \Return $k$
    \EndIf
    \State $s\gets r'+\skips{e}$
    \Comment Canonical skip sequence found, keep it in $s$
    \State $\recvblock{k},k\gets e,k+1$ \Comment Accept $e$, next $k$
    \State $\nextlink{\prevlink{e}},\prevlink{\nextlink{e}}\gets \nextlink{e},\prevlink{e}$ \Comment Unlink $e$ 
    \EndIf
    \State $e\gets \nextlink{e}$
    \EndWhile
    \State \Return $k$
    \EndFunction
  \end{algorithmic}
\end{algorithm}

We now show how to compute the receive schedule $\recvblock{k},
\allowbreak k=0,\ldots q-1$ for any processor $r,0\leq r<p$ in $O(q)$
operations. More precisely, we compute for any given $r$, the $q$
blocks that $r$ will receive in the $q$ successive communication
rounds $k=0,1,\ldots,q-1$ when processor $0$ is the root processor.
The basis for the receive schedule computation is to find $q$
canonical skip sequences to different intermediate processors $r'_k$
for $k=0,1,\ldots,q-1$ with $r'_k\leq r-\skips{k}$ and
$r_0>r'_1>\ldots>r'_{q-1}>0$. That all $r'_k$ are positive will be
ensured by considering processor $p+r$ instead of $r$ as justified by
Observation~\ref{obs:double} since processors $r$ and $r+p$ have
essentially the same receive schedules. Hereby, modulo computations in
the circulant graphs are avoided.  The $k$th skip sequence to $r'_k$
is not allowed to contain the baseblock of any of the previously found
skip sequences. Therefore, the skip sequences may not fulfill exactly
$r'_k=r-\skips{k}$, but it must be ensured that $r'_k\leq r-\skips{k}$ since
$r-\skips{k}$ is the processor from which processor $r$ will receive a
new block in round $k$ in the broadcast algorithm listed in
Algorithm~\ref{alg:broadcast}. As will be seen, the skip sequences
furthermore fulfil that $r-\skips{k+1}\leq r'_k\leq r-\skips{k}$.
These properties will ensure Correctness Condition~\ref{cor:qblocks}
on the receive schedule that over $q$ rounds, a processor will receive
$q$ different blocks, and since $r'_k$ is not too far away from
$r-\skips{k}$, this block can indeed have been delivered to processor
$r-\skips{k}$ prior to communication round $k$ where $r$ is receiving
from $r-\skips{k}$ (Lemma~\ref{lem:bounds}).

For $q$ skips, there are $2^q\geq p$ canonical skip sequences, so
exploring them all will give a linear time (or worse) and not an
$O(\log p)$ time algorithm. Also, using Algorithm~\ref{alg:baseblock}
$q$ times in succession will give an $O(q^2)$ time
algorithm. Observation~\ref{obs:double} is only for even $p$ and will,
naively implemented, also give an $O(\log^2 p)$ algorithm. We instead
do a greedy search through the canonical skip sequences by the special
backtracking algorithm shown as Algorithm~\ref{alg:dfsblocks}.  The
complexity of the computation is reduced to $O(q)$ steps by removing
the smallest (last found) skip index, corresponding to the baseblock
for $r'_k$, each time a good canonical skip sequence has been found,
and then backtracking only enough to find the next, $k+1$st skip
sequence to $r'_{k+1}$.  The smallest (last found) skip index of the
$k$th canonical skip sequence will be taken as $\recvblock{k}$.  The
\textsc{allblocks()} function of Algorithm~\ref{alg:dfsblocks} stores
the (remaining) skip indices in a doubly linked list in decreasing
order. Thus, for a skip index $e$, $\nextlink{e}$ is the next,
smaller, remaining skip index. This list is used to try the skips in
decreasing order as in Algorithm~\ref{alg:baseblock}. Once an $r'_k$
is found, the corresponding skip index $e$ (baseblock) is removed by
linking it out of the doubly linked list in $O(1)$ time as also shown
in Algorithm~\ref{alg:dfsblocks}.

In the \textsc{allblocks()} function, $r'$ is the largest intermediate
processor found so far. Each recursive call looks for the largest,
remaining skip index $e$ for which $r'+\skips{e}\leq r-\skips{k}$
where furthermore $r'+\skips{e}$ is different from $r'_{k-1}$, causing
a traversal of some of the remaining skip indices in the list. If
furthermore also $r'+\skips{e}\leq r-\skips{k+1}$, a recursive call is
done on $r'+\skips{e}$ to find an intermediate processor even closer to
$r-\skips{k}$.  Upon return, processor $r'+\skips{e}$ is taken as
$r'_k$ if still $r'\leq r-\skips{k+1}$ and $e$ will be the baseblock stored
as $\recvblock{k}$.

\begin{algorithm}
  \caption{Computing the receive schedule for processor $r,0\leq r<p$.}
  \label{alg:recvschedule}
  \begin{algorithmic}
    \Procedure{recvschedule}{$r,\recvblock{q}$}
    \For{$e=0,\ldots,q$} $\nextlink{e},\prevlink{e}\gets e-1, e+1$
    \EndFor \Comment Doubly linked, circular list of skips in decreasing order
    \State $\prevlink{q},\nextlink{-1},\prevlink{-1}\gets -1,q,0$
    \State $b\gets\Call{baseblock}{r}$
    \State $\nextlink{\prevlink{b}},\prevlink{\nextlink{b}}\gets \nextlink{b},\prevlink{b}$ \Comment Unlink baseblock index $b$ 
    \State\Call{allblocks}{$p+r,0,p+p,q,0,\recvblock{}$}  
    \For{$k=0,\ldots,q-1$} \Comment Make baseblock $b$ only non-negative block (Condition~\ref{cor:qblocks})
    \If{$\recvblock{k}=q$} $\recvblock{k}\gets b$
    \Else\ $\recvblock{k}\gets\recvblock{k}-q$
    \EndIf
    \EndFor
    \EndProcedure
  \end{algorithmic}
\end{algorithm}

The \textsc{allblocks()} function is now used to compute the receive
schedule for a processor $r,0\leq r<p$ as shown in
Algorithm~\ref{alg:recvschedule}.  The algorithm uses the $q+1$ skips
computed by Algorithm~\ref{alg:circulants} (including $\skips{q}=p$)
and searches for canonical paths to processor $p+r$. In order to
exclude the canonical path leading to $r$ itself with baseblock $b$
(as computed by Algorithm~\ref{alg:baseblock}), $b$ is removed from
the list of skip indices before calling \textsc{allblocks()}. 
The search starts from the largest skip index $e$ with $k=0$ and the
previous processor set to $s=p+p$. Upon return from \textsc{allblocks()},
$\recvblock{k}$ contains $q$ different, non-negative skip indices. In order to
arrive at a schedule where only the index for the baseblock is positive
as required by Theorem~\ref{thm:scheduleuse}, $q$ is subtracted from all
$\recvblock{k}$, except for the $k$ where $\recvblock{k}=p$: This is the
round where processor $r$ receives its baseblock from root $r=p$.

\begin{lemma}
  \label{lem:range}
  A call $\textsc{allblocks}(p+r,0,p+p,q,0,\recvblock{})$ implicitly
  determines $q$ processors $r'_k$ with $r-\skips{k+1}\leq r'_k\leq r-\skips{k}$
  and stores the corresponding $q$ different baseblocks in $\recvblock{k}$.
  Each $\recvblock{k}$ is the largest baseblock in its interval that can
  be found without using the skips of the baseblocks found in the intervals
  before $k$.
\end{lemma}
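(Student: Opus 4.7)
My plan is to argue by structural induction on the recursion tree of \textsc{allblocks}. The central invariant to maintain at each call \textsc{allblocks}($r, r', s, e, k_0, \recvblock{}$) is: the accumulator $r'$ equals the sum of a strictly decreasing sequence of skip indices $[e_j, \ldots, e_1]$ with all $e_i > e$, all already unlinked, forming the prefix of a canonical skip sequence; the bound $s$ equals the previously accepted intermediate $r'_{k_0 - 1}$ (or $2p$ initially), enforcing the strict monotonicity $r'_0 > r'_1 > \cdots > r'_{q-1}$ demanded by the lemma. Upon return with index $k$, the slots $\recvblock{k_0}, \ldots, \recvblock{k-1}$ have been filled with the smallest-indexed skip of a canonical path to some $r'_{k_i} \in [r-\skips{k_i+1}, r-\skips{k_i}]$, and these skips have all been removed from the linked list.

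Next I would analyze the while loop, which iterates skip indices via $\nextlink{}$ in decreasing order. For a candidate $\skips{e}$, the guard $r'+\skips{e}\le r-\skips{k} \wedge r'+\skips{e}<s$ is exactly what is needed for $r'+\skips{e}$ to be a legal value for the next intermediate. If additionally $r'+\skips{e}\le r-\skips{k+1}$, the candidate still lies strictly below the round-$k$ interval, so we recurse with $r'+\skips{e}$ to climb upward using smaller skips; otherwise $r'+\skips{e}$ is already inside $[r-\skips{k+1}, r-\skips{k}]$, and accepting $\skips{e}$ is correct. By the decreasing iteration order and the fact that $\skips{e}$ is the first skip meeting the guard, $\recvblock{k}=e$ is the largest baseblock available in the interval after exclusion of the previously accepted baseblocks; the unlinking step then guarantees distinctness.

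The most delicate step is justifying the post-recursion test ``\textbf{if} $r' > r-\skips{k+1}$ \textbf{then return} $k$''. During the recursive call, $k$ may have been incremented several times. I would show that this test holds precisely when the outer $r'$ is already so large that no further downward-monotone extension using the remaining skips $\le \skips{e}$ can yield a valid $r'_k$ for the new $k$; concretely, by Lemma~\ref{lem:bounds} the sum of the remaining smaller skips is bounded, so overshooting the interval $[r-\skips{k+1}, r-\skips{k}]$ becomes inevitable. When the test fails, falling through and accepting $\skips{e}$ is safe because $r'\le r-\skips{k+1}$ together with $r'+\skips{e}\le r-\skips{k}$ (the outer guard) places $r'+\skips{e}$ in the correct interval.

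The principal obstacle, as I see it, is stating the invariant tightly enough that all three parameters $r'$, $s$, $e$ carry the right semantics across recursive levels, and then proving that the search cannot fail to produce a valid $r'_k$ at every round. Non-failure reduces to showing that for each $k$ the interval $[r-\skips{k+1}, r-\skips{k}]$ contains at least one canonical path avoiding the previously used baseblocks; this is exactly where Lemma~\ref{lem:kplus1} is essential, since it supplies $k+1$ distinct baseblocks in any window of length $\skips{k}$, giving enough spare baseblocks to cover all $q$ rounds. Maximality of $\recvblock{k}$ then follows from the greedy, decreasing-order enumeration together with the fact that the recursion never accepts a smaller skip when a larger one still fits.
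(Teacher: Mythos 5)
Your overall strategy matches the paper's: an invariant-based analysis of the recursive greedy search, with interval membership coming from the guards, distinctness from the unlinking of accepted skip indices, existence from Lemma~\ref{lem:kplus1}, and maximality from the decreasing-order enumeration. The paper's proof is terser — it maintains only the invariant $r'\leq r-\skips{k}$ before each \textbf{while}-loop and notes that a recursive call preserves it because it is made only when $r'+\skips{e}\leq r-\skips{k+1}$ — but the skeleton is the same.

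The one step where your plan goes astray is the justification of the post-recursion test ``\textbf{if} $r'>r-\skips{k+1}$ \textbf{then return}''. You claim this fires \emph{precisely} when no extension by a remaining skip can land in $[r-\skips{k+1},r-\skips{k}]$, and you plan to prove via Lemma~\ref{lem:bounds} that overshooting is inevitable. That is not true in general: if $r'>r-\skips{k+1}$, then $r-\skips{k}-r'<\skips{k+1}-\skips{k}\leq\skips{k}$, so a remaining skip with $\skips{e'}\leq r-\skips{k}-r'$ (hence $e'<k$) can still place $r'+\skips{e'}$ inside the interval, and such small indices are in general still in the list. The early return is not there because continuing is impossible; it is there because continuing would defeat the \emph{maximality} claim: accepting such a small $e'$ at the current level would record a small baseblock, whereas backtracking hands control to an ancestor whose accumulator satisfies $r'\leq r-\skips{k+1}$ and whose pending, larger skip index then jumps from below the interval into it, yielding the largest available baseblock. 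This is exactly what the paper's proof encodes in the pair of conditions $r'\leq r-\skips{k+1}$ and $r'+\skips{e}\geq r-\skips{k+1}$ and the remark that ``no further edges inside the interval are explored''; the same pair also supplies the lower bound $r'_k\geq r-\skips{k+1}$ when a recursive call has advanced $k$. You would need to replace the ``inevitable overshoot'' argument by this maximality argument. The rest of your plan (semantics of $s$, canonical-prefix invariant for $r'$, Lemma~\ref{lem:kplus1} for non-failure) is sound, modulo the small slip that the skip indices already summed into $r'$ are $\geq e$ rather than $>e$, since the just-added index $e$ is passed down unchanged to the recursive call.
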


The existence of such a sequence of processors and baseblocks (skip
indices) is guaranteed by Lemma~\ref{lem:kplus1}.

\begin{proof}
  The recursive calls maintain the invariant that $r'\leq r-\skips{k}$
  before the \textbf{while}-loop, which holds prior to the first call
  where $r'=0$ and $k=0$.  A recursive call is done only if
  $r'+\skips{e}\leq r-\skips{k+1}<r-\skips{k}$ which will maintain the
  invariant for the call. Otherwise, $r-\skips{k}\geq
  r'+\skips{e}>r-\skips{k+1}$. Upon return from a recursive call the
  invariant is rechecked since $k$ can have changed by the recursive
  call. Thus, whenever $r'+\skips{e}$ is accepted as $r'_k$, then
  $r-\skips{k+1}\leq r'_k\leq r-\skips{k}$. Each time a new baseblock
  for $\recvblock{k}$ is found, it is removed from the list of skip
  indices. Therefore, the returned baseblocks will be different.  The
  conditions on the receive blocks that $r'\leq r-\skips{k+1}$ and
  $r'+\skips{e}\geq r-\skips{k+1}$ means that $e$ is the largest
  baseblock in the interval found not using prior baseblocks; no
  further edges inside the interval are explored.
\end{proof}

\begin{lemma}
  Function \textsc{allblocks()} as called from Algorithm~\ref{alg:recvschedule}
  performs at most $q-1$ recursive calls.
\end{lemma}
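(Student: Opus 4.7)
The plan is to bound the number of recursive calls by a counting argument over the call tree of \textsc{allblocks}. Let $a_C$ denote the number of native baseblock acceptances at call $C$ (those iterations of $C$'s \textbf{while}-loop that assign to some $\recvblock{k}$, not counting acceptances inside recursive subcalls), and let $c_C$ denote the number of recursive subcalls made by $C$. Since every acceptance increments $k$ by one and the top-level invocation terminates with $k = q$, we have $\sum_C a_C = q$, and the task is to show $\sum_C c_C \leq q-1$.

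First I would classify each iteration of each call $C$ by the outcomes of its three tests (condition, can-recurse, should-return). Let $d_C$ count the ``recurse-then-return'' iterations of $C$ (both condition and can-recurse hold, but after the child returns should-return fires and $C$ returns without accepting), and let $r_C$ count the ``accept-without-recurse'' iterations (condition holds, can-recurse fails, should-return is false, so $C$ accepts the current $e$ without recursing). A direct inspection of the five possible outcomes per iteration yields the per-call identity $c_C - a_C = d_C - r_C$. Summing over all calls reduces the lemma to proving the slack inequality $\sum_C r_C \geq \sum_C d_C + 1$.

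To prove the slack, I would invoke the invariant from Lemma~\ref{lem:range} that $r' \leq r - \skips{k+1}$ at the start of any recursive call, so should-return is initially false and a ``recurse-then-return'' iteration in $C$ can occur only after its child productively advances $k$ past a threshold. The key observation is that every productive subtree contains an ``accept-without-recurse'' iteration: its first acceptance in DFS order happens at the deepest call reached during descent, where can-recurse must have failed (otherwise the algorithm would recurse further), so this acceptance has the ``accept-without-recurse'' form. An injection assigning each ``recurse-then-return'' event to a distinct ``accept-without-recurse'' in its child's subtree yields $\sum_C r_C \geq \sum_C d_C$; the extra $+1$ comes from the very first acceptance in the entire DFS traversal, which is always an ``accept-without-recurse'' at the deepest initial call and is not charged by any ``recurse-then-return'' above it.

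The main obstacle will be ensuring injectivity of this charging in the presence of nested ``recurse-then-return'' events, where two different case-$d$ calls might naively both want to charge the same deepest ``accept-without-recurse'' in their shared subtree. I would resolve this by processing the call tree in post-order, charging each ``recurse-then-return'' event to the earliest ``accept-without-recurse'' in its productive child's subtree that has not yet been claimed; a structural induction on the tree then confirms the map remains well-defined and injective throughout, completing the bound of at most $q-1$ recursive calls.
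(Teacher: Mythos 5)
Your accounting setup is sound and is essentially a formalization of the paper's own argument: the paper charges each recursive call to an eliminated skip index (either the index $e$ accepted by the caller immediately after the call returns, or, failing that, some index accepted inside the call's subtree, which must exist because $k$ changed), and gets the $-1$ from the observation that at least one index is accepted without any recursive call (namely an $e$ with $r'+\skips{e}=r-\skips{k}$, which occurs at least once for $k=0$). Your identity $c_C-a_C=d_C-r_C$ and the reduction to $\sum_C r_C\geq\sum_C d_C+1$ are correct, but note that this inequality is, by that very identity, exactly equivalent to the lemma itself --- all of the content is in proving it, and that is where your argument has a genuine gap.

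The gap is the supply bound in your charging scheme. The only quantitative fact you establish is that each productive subtree contains \emph{at least one} accept-without-recurse iteration (its first acceptance in DFS order). That is not enough in precisely the nested situation you identify: in a cascade $P_{j}\to P_{j+1}\to\cdots\to P_m$ where each intermediate call recurses at its first qualifying iteration without accepting anything, and where after $P_m$ returns the test $r'>r-\skips{k+1}$ fires at every level up to $P_{j+1}$, you get $m-1-j$ recurse-then-return events whose child subtrees all contain the same innermost productive subtree of $P_m$ and nothing else productive. Your argument guarantees only one accept-without-recurse there, so ``post-order processing, charging to the earliest unclaimed'' can run out of targets; the structural induction you invoke to confirm injectivity is exactly the claim that the supply suffices, and you give no reason why a subtree in which $k$ advanced by $\Delta$ must contain enough accept-without-recurse events to cover all the return checks its $\Delta$-advance triggers above it (a priori, $\skips{k'+1}-\skips{k_0+1}$, which governs how far up the cascade propagates, can greatly exceed $k'-k_0$, the number of acceptances available). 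To close the proof you would need a quantitative lemma relating the length of a return cascade to the number of accept-without-recurse events in the subtree that triggered it --- or a differently organized charging that, like the paper's, also credits the index $e$ accepted by the ancestor at which the cascade finally stops.
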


\begin{proof}
  The \textsc{allblocks()} function maintains the invariant that prior
  to each call, $r'<r-\skips{k}$. After a recursive call, either
  $r'+\skips{e}$ is accepted as $r'_k$ and $e$ removed from the doubly
  linked list never to be considered again, or, if not, then $k$ has
  changed because some other $e$ has been accepted by the recursive
  call.  Therefore, each recursive call causes the elimination of at
  least one skip index $e$, and since there are only $q$ such indices,
  the claim follows since at least one index $e$ in the list will not
  lead to a recursive call (namely, when $r'+\skips{e}=r-\skips{k}$
  which happens at least once when $k=0$).
\end{proof}

\begin{lemma}
  Let $R$ be the number of recursive calls caused by the first
  invocation of \textsc{allblocks()}. Then there are at most $2q+R$
  iterations in total of the \textbf{while}-loop over all calls.
\end{lemma}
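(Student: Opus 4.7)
The plan is to partition all while-loop iterations across all calls into \emph{productive} iterations (where the outer \textbf{if} holds, so a recursion, an accept, or both occurs) and \emph{idle} iterations (where the outer \textbf{if} fails and $e$ is merely advanced), and to bound the two classes separately by $q+R$ and $q$.

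To bound productive iterations by $q+R$, I would argue as follows. Each productive iteration performs at most one recursion and at most one accept. Each accept unlinks a distinct skip index from the doubly linked list, which holds $q$ indices immediately before the call to \textsc{allblocks()} (after the baseblock has been removed), so the total number of accepts $A$ satisfies $A\leq q$. The total number of recursions is $R$ by assumption. Writing the productive iterations as (accept-only) $+$ (recurse-only, i.e., recursion followed by the early \textbf{return}) $+$ (recurse-and-accept), and using $A = $ (accept-only) $+$ (recurse-and-accept) $\leq q$ and $R = $ (recurse-only) $+$ (recurse-and-accept), I get productive $= A + R - $ (recurse-and-accept) $\leq q + R$.

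To bound idle iterations by $q$, I would exploit two invariants of a single call. First, once the first accept in the call has occurred, $s$ is set to $r'+\skips{e}$, so for every subsequent $e'<e$ the condition $r'+\skips{e'}<s$ holds automatically (since $\skips{e'}<\skips{e}=s-r'$) and can no longer cause an idle iteration. Second, the condition $r'+\skips{e}\leq r-\skips{k}$ can only tighten when $k$ is increased by a recursive sub-call. Consequently, an idle iteration within a single call can occur only either in an initial prefix (before the call's first action) or immediately after a recursive sub-call that raised $k$. Using Observation~\ref{obs:skips} ($\skips{k+1}\leq 2\skips{k}$), one verifies that each $k$-raising recursion triggers at most a constant number of post-recursion idle iterations in the parent, contributing at most $R$ to the total. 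The initial-prefix idle iterations are then charged to distinct skip indices (or to the spawning recursion), giving at most $q-R$ across all calls and yielding the overall bound of $q$ idle iterations.

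The main obstacle is the global amortization of the initial-prefix idle iterations across the $R+1$ calls, because in a recursive sub-call the inherited values of $s$ and $k$ from the parent can force several idle iterations before $\skips{e}$ shrinks enough to enter the productive range. A per-call argument only gives $O(R+1)$ initial idles, not the sharper $q-R$ needed. I would handle this by introducing a potential function (for instance, the current list size, which decreases by one per accept) and paying each initial idle either by a subsequent accept that unlinks a new list element or by a credit from the triggering recursion; this couples the $R+1$ per-call initial prefixes tightly to the $q$ available skip indices. Combining the productive bound $q+R$ with the idle bound $q$ then yields the claimed total of at most $2q+R$ iterations.
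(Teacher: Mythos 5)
Your charging scheme --- $q$ accepted indices, $R$ recursion sites, and at most $q$ idle rejections --- is essentially the same accounting that underlies the paper's own (very terse) proof, which takes $q$ scans as a baseline, pairs each rejected scan of a skip index with one re-scan in a pending ancestor call, and adds $R$ for the re-examinations at recursion sites. However, two of the steps you commit to have genuine gaps. First, your classification of productive iterations is incomplete: inside the outer \textbf{if}, the code may skip the recursion (because $r'+\skips{e}>r-\skips{k+1}$) and then immediately take the early \textbf{return} (because $r'>r-\skips{k+1}$, $k$ having grown through an accept earlier in the same call). Such an iteration is neither an accept nor a recursion, so it is not covered by $A+R-(\text{recurse-and-accept})$; you must either prove this case cannot occur (which requires tracking when $r'>r-\skips{k+1}$ can first become true relative to the most recent accept in the call, given that at call entry $r'\leq r-\skips{k+1}$) or budget for it, and a priori there can be one per call, i.e.\ up to $R+1$ of them, which would break the $q+R$ bound.

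Second, and more seriously, the idle bound of $q$ --- the heart of the lemma --- is not actually proved. You acknowledge that a per-call argument only yields $O(R+1)$ initial-prefix idles and propose a potential function to close the gap, but you neither define the potential nor verify the local claim it would rest on; moreover, your plan needs \emph{exactly one} post-recursion idle iteration per recursion to stay within the stated budget, whereas you only claim ``a constant number'' via Observation~\ref{obs:skips}, and any constant $c>1$ already gives $cR$ rather than $R$. (The number of remaining skip indices $e'$ with $r-\skips{k_{\mathrm{new}}}<r'+\skips{e'}\leq r-\skips{k_{\mathrm{old}}}$ after a $k$-raising return is not obviously at most one, since the skips only roughly halve.) As written, your argument establishes at best $O(q+R)$ with unspecified constants, not the claimed $2q+R$; the decomposition is the right one, but both halves still need the actual combinatorial work.
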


\begin{proof}
  All skip indices must be scanned at least once in the
  \textbf{while}-loops of the recursive calls. When for some skip
  index $e$, $r'+\skips{e}>r-\skips{k}$ and no recursive call is
  performed, this index will be scanned again in some earlier, pending
  recursive call (backtrack).  When $r'+\skips{e}$ is not accepted
  after the return from the recursive call, because the invariant that $r'\leq
  r-\skips{k+1}$ is violated, the algorithm backtracks to a point
  where $e$ can be accepted. Thus, the total number of scans is at most
  $2q+R$.
\end{proof}

We have now established that Algorithm~\ref{alg:recvschedule} with the
help of Algorithm~\ref{alg:dfsblocks} in $O(q)$ steps computes
$\recvblock{k},k=0,1,\ldots,q-1$ fulfilling Correctness
Condition~\ref{cor:qblocks} that
$\recvblock{}=\{-1,-2,\ldots,-q\}\setminus\{b-q\}\cup\{b\}$. It 
remains to be shown that Correctness Condition~\ref{cor:recvbefore}
is also fulfilled.  To this end, we define for processor $r$ that
$\sendblock{k}_r = \recvblock{k}_{t^k_r}$ where as in
Algorithm~\ref{alg:broadcast}, $t^k_r=(r+\skips{k})\bmod p$.  With
this definition, the first two correctness conditions from
Section~\ref{sec:broadcasts} are obviously satisfied. It needs to be
shown that $\recvblock{k}_r$ for processor $r$ is either
$\recvblock{j}_{r-\skips{k}}$ for a $j,j<k$ or the baseblock
$b_{r-\skips{k}}$ for processor $r-\skips{k}$.

\begin{lemma}
  For each processor $r, p\leq r<2p$, either
  $\recvblock{k}_r=\recvblock{j}_{r-\skips{k}}$ for some $j,j<k$ or
  $\recvblock{k}_r=b_{r-\skips{k}}$ where $b_{r-\skips{k}}$ is the baseblock
  for processor $r-\skips{k}$.
\end{lemma}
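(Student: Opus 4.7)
The plan is to split on whether the intermediate processor $r'_k$ from Lemma~\ref{lem:range} equals the from-processor $S := r - \skips{k}$ or is strictly smaller; by Lemma~\ref{lem:range}, $r - \skips{k+1} \le r'_k \le S$, so exactly one of the two cases holds. In either case $\recvblock{k}_r$ is the smallest skip index of the path $P_k$ from $0$ to $r'_k$ that Algorithm~\ref{alg:dfsblocks} discovers, so the task is to identify this smallest skip either with $b_S$ or with some $\recvblock{j}_S$ for $j<k$.

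If $r'_k = S$, then $P_k$ is a skip decomposition of $S$ that avoids the already-excluded set $\{b_r\} \cup \{\recvblock{j}_r : j < k\}$, and I argue its smallest skip must be $b_S$. Absent any exclusion, \textsc{allblocks} is the same greedy largest-first search as \textsc{baseblock} and returns the canonical decomposition of $S$, whose smallest skip is $b_S$ by definition. Whenever an excluded index happens to lie on $S$'s canonical decomposition, Observations~\ref{obs:tunnel} and~\ref{obs:cut}, which list the only sources of non-uniqueness in such decompositions, supply an alternative skip sequence with the same smallest skip $b_S$, so the search must return that alternative; the maximality clause of Lemma~\ref{lem:range} then rules out any decomposition with a strictly larger smallest skip reaching the single point $S$, pinning the baseblock to $b_S$.

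If $r'_k < S$, then combining Lemma~\ref{lem:range} with Observation~\ref{obs:skips} gives $r'_k \ge r - \skips{k+1} = S - (\skips{k+1} - \skips{k}) \ge S - \skips{k}$, so
\[
  r'_k \in [S - \skips{k},\, S - 1] = \bigcup_{j=0}^{k-1}[S - \skips{j+1},\, S - \skips{j}],
\]
placing $r'_k$ in the target interval of exactly one round $j < k$ of $S$'s own \textsc{allblocks} invocation. I would then couple the two executions and show that $S$'s algorithm in that round $j$ also fixes $s'_j = r'_k$ via the same path $P_k$, so $\recvblock{j}_S = \recvblock{k}_r$. The hard part is precisely this coupling: the two executions start with different excluded baseblocks ($b_r$ versus $b_S$) and have pruned different index sets by the time they reach the relevant round, and I must maintain the invariant that any index additionally pruned by $r$'s execution is irrelevant to paths into $[S - \skips{j+1}, S - \skips{j}]$. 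The anchor is again Lemma~\ref{lem:range}'s maximality statement: both algorithms pick the largest available baseblock in the relevant interval, so once the set of live skip indices of interest is shown to coincide, the greedy choice is forced to be the same. Formalising this invariant, by induction on $k$ and using Lemma~\ref{lem:bounds} to bound the contribution of skip indices above the relevant threshold, is the technical heart of the proof.
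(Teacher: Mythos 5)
Your second case opens with exactly the paper's first step: from Observation~\ref{obs:skips}, $r'_k \ge r-\skips{k+1} \ge S-\skips{k}$, so $r$'s round-$k$ interval sits inside $\bigcup_{j<k}[S-\skips{j+1},\,S-\skips{j}]\cup\{S\}$, the union of the target intervals of $S$'s own first $k$ rounds; and the appeal to the maximality clause of Lemma~\ref{lem:range} is also how the paper closes. But both of your cases have genuine gaps. In the first case ($r'_k=S$), the claim that the smallest skip of $P_k$ must be $b_S$ is not justified: by round $k$ the indices $\{b_r\}\cup\{\recvblock{j}_r : j<k\}$ have been unlinked from the list, and $b_S$ can be among them (distinct processors share baseblock values), in which case any path that still reaches $S$ must end in a different skip. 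Observations~\ref{obs:tunnel} and~\ref{obs:cut} only assert that alternative decompositions exist; they do not say the alternatives share the same smallest skip --- with skips $1,2,3,\ldots$ the two decompositions $\{\skips{2}\}$ and $\{\skips{0},\skips{1}\}$ of $3$ have smallest skips $2$ and $0$. In that situation the lemma has to be discharged by the other disjunct, $\recvblock{k}_r=\recvblock{j}_S$, which your case split cannot produce. The paper avoids this trap by splitting not on where $r'_k$ lands but on whether the computed block equals $b_S$, and otherwise arguing that the block is the largest baseblock still available in a range covered by $S$'s rounds $0,\ldots,k-1$.

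In the second case you have correctly identified the coupling of the two \textsc{allblocks()} executions as the technical heart --- and then you leave it undone, which means the proposal stops exactly where the proof has to begin. You also aim at too strong a target: showing $s'_j=r'_k$, i.e., that $S$ selects the identical intermediate processor via the identical path, is more than is needed and is not what the paper establishes. The paper instead works at the level of Lemma~\ref{lem:range}'s characterization (``largest baseblock in the interval not using prior baseblocks'') and argues that, since $S$'s search starts with only $b_S$ excluded, whatever large baseblock $r$ finds in $[r-\skips{k+1},S]\setminus\{S\}$ must be among those $S$ collects in its first $k$ rounds. So: right skeleton and the right covering of intervals, but one false step in the first case and the decisive argument missing in the second.
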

\begin{proof}
  We first note that since
  $(r-\skips{k})-\skips{(k-1)+1}=r-\skips{k}-\skips{k}\geq
  r-\skips{k+1}$, the intermediate processor $r'_k$ chosen by
  processor $r$ in round $k$ is in the range of possible 
  intermediate processors chosen by processor $r-\skips{k}$. The baseblock
  $\recvblock{k}_r$ is the largest in the range $r-\skips{k+1}$ to
  $r-\skips{k}$ not using any prior baseblocks for smaller $k$. If
  $\recvblock{k}_r=b_{r-\skips{k}}$ there is nothing to
  show. Otherwise, $\recvblock{k}_r$ is a largest block in the range excluding
  processor $r-\skips{k}$, and indeed one of the baseblocks found by
  processor $r-\skips{k}$ in the rounds up to $k-1$ since initially
  only skip index $b_{r-\skips{k}}$ is excluded from the search.
\end{proof}

The lemmas together establish our main theorem which states that
correct receive schedules can be computed in optimal time.

\begin{theorem}
  \label{thm:recvsummary}
  For any number of processors $p$, a correct receive schedule for a
  $p$-processor circulant graph with $\ceiling{\log_2 p}$ skips computed by
  Algorithm~\ref{alg:circulants}, fulfilling the four correctness
  conditions, can be computed by Algorithm~\ref{alg:recvschedule} in
  $O(\log p)$ time steps for any processor $r,0\leq r<p$.
\end{theorem}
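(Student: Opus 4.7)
The plan is to assemble the pieces already established in the preceding lemmas into a single statement. For the complexity claim I would bound each step of Algorithm~\ref{alg:recvschedule} separately. The circular doubly linked list is initialized in $q+2$ assignments. The call to \textsc{baseblock} runs the loop of Algorithm~\ref{alg:baseblock} at most $q$ times. Unlinking the baseblock index is $O(1)$. The invocation of \textsc{allblocks} is bounded using the two lemmas immediately before the theorem: the number of recursive calls is at most $q-1$, and the total number of \textbf{while}-loop iterations over all recursive calls is at most $2q + R \le 3q - 1$. Finally, the normalization loop runs $q$ iterations. Summing gives $O(q) = O(\log p)$.

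For correctness I would verify the four conditions in order. Conditions~1 and 2 are immediate from the definition $\sendblock{k}_r = \recvblock{k}_{t^k_r}$ stated just after Algorithm~\ref{alg:recvschedule}. Condition~3 follows from Lemma~\ref{lem:range}: after \textsc{allblocks} returns, the $q$ stored values are $q$ distinct skip indices in $\{0,1,\ldots,q-1,q\}$, exactly one of which equals $q$ (placed in the round where the search reaches the root copy at offset $p$). The normalization loop rewrites this $q$ as the baseblock $b$, and subtracts $q$ from the others, so the resulting set is $(\{-1,\ldots,-q\}\setminus\{b-q\})\cup\{b\}$, as required. Condition~4 is supplied by the last lemma before the theorem: each $\recvblock{k}_r$ either equals $b_{r-\skips{k}}$ or equals some $\recvblock{j}_{r-\skips{k}}$ with $j<k$, hence by the definition of $\sendblock{}$ every block that processor $r$ sends in round $k$ is a block that processor $r$ itself received in an earlier round (or is its baseblock received in the very same search).

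One subtlety that I would make explicit is why the call is made on $p+r$ instead of $r$: Observation~\ref{obs:double} guarantees that the schedule for processor $r \in [0,p)$ coincides, up to a shift of $q$ in the negative entries, with the one for processor $p+r$ in the $2p$-processor picture. Running \textsc{allblocks} on $p+r$ keeps all intermediate processors $r'_k$ strictly positive, so the range bounds $r - \skips{k+1} \le r'_k \le r - \skips{k}$ needed by Lemma~\ref{lem:range} are valid arithmetic inequalities rather than modular ones. Lemma~\ref{lem:kplus1} guarantees that enough distinct baseblocks exist in each window, so the greedy search always finds a valid $r'_k$ and never exhausts the linked list prematurely.

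The main obstacle, such as it is, is not novel reasoning but bookkeeping: one has to check that the invariants assumed by Lemma~\ref{lem:range} and the two iteration-count lemmas are precisely those set up by the wrapper in Algorithm~\ref{alg:recvschedule} (in particular, the initial parameters $r'=0$, $s=p+p$, $e=q$, $k=0$, and the prior removal of the baseblock index $b$ from the linked list). Once that matching is verified, the theorem follows by simply chaining the lemmas together.
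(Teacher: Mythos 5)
Your proposal is correct and follows essentially the same route as the paper: the paper offers no separate proof for this theorem beyond the remark that ``the lemmas together establish our main theorem,'' i.e.\ exactly the chaining of Lemma~\ref{lem:range}, the two iteration-count lemmas, and the final lemma on $\recvblock{k}_r$ versus $\recvblock{j}_{r-\skips{k}}$ that you carry out. Your explicit accounting of the $O(q)$ cost of the wrapper steps (list initialization, \textsc{baseblock}, normalization) and of the role of the shift to $p+r$ only makes explicit what the paper leaves implicit.
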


\subsection{The send schedule}
\label{sec:sendschedule}

A straightforward computation of send schedules from the receive
schedules by for processor $r$ setting
$\sendblock{k}_r=\recvblock{k}_{t^k_r}$ with each $\recvblock{k}$
computed by Algorithm~\ref{alg:recvschedule} will take $O(\log^2 p)$
operations. To reach $O(\log p)$ operations, a different approach is
required. This structural approach can be seen as a generalization of
the well-known, easy case where $p=2^q$ is a power of
two~\cite{JohnssonHo89}. The starting point is captured by
Observation~\ref{obs:doublesend} below which complements
Observation~\ref{obs:double}.

\begin{observation}
  \label{obs:doublesend}
  Let a correct send schedule for $p$ processors be given. A correct
  send schedule for $2p$ processors can be constructed by the
  following modifications.  First, for the large processors $r, p\leq
  r<2p$, copy the send schedule for processor $r-p$, such that
  $\sendblock{k}_r=\sendblock{k}_{r-p}$ for each round $0\leq
  k<q$. Since $q$ is one larger in the $2p$ processor schedule than in
  the $p$ processor schedule, subtract $1$ from all negative blocks in
  the send schedules. For the last round, set $\sendblock{q}_r=b_r$ for
  the small processors $r,0\leq r<p$ where $b_r$ is the baseblock for
  $r$. For the large processors $r,p\leq r<2p$, replace all positive
  send blocks with $-1$ and set $\sendblock{q}=-1$ for the last round.
\end{observation}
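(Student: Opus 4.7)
The proof will verify that the four correctness conditions of Section~\ref{sec:broadcasts} hold for the $2p$-processor send schedule produced by the construction, given that they hold for the input $p$-processor send schedule and that the $2p$-processor receive schedule is the one produced by Observation~\ref{obs:double}. Condition~3 is about the receive schedule and is already settled by Observation~\ref{obs:double}, so the remaining work splits into two tasks: verifying the send--receive matching identity $\sendblock{k}_r = \recvblock{k}_{t^k_r}$ with $t^k_r = (r+\skips{k})\bmod 2p$ (Conditions~1--2), and verifying that each send block has been received earlier or equals the virtual initial block $b_r - q$ (Condition~4). Throughout, I use that the first $q$ skips of the $2p$-processor circulant graph coincide with those of the $p$-processor graph, with $\skips{q}=p$ promoted to the new boundary-crossing skip.

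For Conditions~1--2 I would do a case analysis on each inherited round $k<q$ according to whether $r$ and $t^k_r$ lie in the small half $[0,p)$ or the large half $[p,2p)$, reducing each new matching identity to the old identity $\sendblock{k}_{r\bmod p} = \recvblock{k}_{(r+\skips{k})\bmod p}$. Negative entries receive a consistent $-1$ decrement on both sides. The only subtle situation is when the shared old value is a positive baseblock: for a large sender Observation~\ref{obs:doublesend} rewrites it to $-1$, matching the rewrite that Observation~\ref{obs:double} performs on the large receiver's side; for a small sender one needs that the to-processor is itself small, which follows from an auxiliary structural lemma that a positive $\sendblock{k}_r$ in the $p$-processor schedule never crosses the boundary, i.e., $r+\skips{k}<p$. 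I would derive this lemma from the canonical-skip-sequence construction of Section~\ref{sec:recvschedule}: positive sends correspond to the last step of a canonical skip sequence in the sense of Lemma~\ref{lem:canonical}, and canonical sequences are strictly increasing partial sums bounded above by $p$, hence never wrap. For the new round $k=q$ with $\skips{q}=p$ the to-processor toggles halves, and the settings of Observations~\ref{obs:double} and~\ref{obs:doublesend} match pairwise by direct construction: small $r$ sends $b_r$ to $r+p$ which receives $b_r$ (with the distinguished $r=0$, $t^q_0=p$ case resolved by $b_0 = q$), and large $r$ sends $-1$ to $r-p$ which receives $-1$.

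For Condition~4 I would treat inherited and new rounds separately. For inherited $k<q$ the old Condition~4 relation $\sendblock{k}_r = \recvblock{j}_r$ with $j<k$ is preserved because negatives are shifted uniformly on both send and receive sides while positive baseblocks of small processors stay in place; the case $\sendblock{k}_r = b_r - q$ becomes $b_r - (q+1)$, consistent with the new convention for $q$. When Observation~\ref{obs:doublesend} rewrites a positive old send of a large processor to $-1$, the matching old baseblock entry in the receive schedule (at some $j^*<k$) has just been rewritten to $-1$ by Observation~\ref{obs:double}, so the pair $(\sendblock{k}_r,\recvblock{j^*}_r)$ still matches with value $-1$. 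In the new round $k=q$, a small $r$ sends $b_r$, which equals the still-positive $\recvblock{j^*}_r$ carrying the baseblock in some inherited round $j^*<q$; a large $r$ sends $-1$, matched by an earlier $-1$-rewritten entry or, in the special case $r=p$, by the virtual initial block $b_p - (q+1) = q - (q+1) = -1$. The root $r=0$ is vacuously covered because it starts with all blocks.

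The principal obstacle is the auxiliary structural lemma that positive sends in the $p$-processor schedule do not cross the boundary; without it the positive-to-$-1$ rewrite on the receiver side done by Observation~\ref{obs:double} would have no corresponding rewrite on the sender side and Conditions~1--2 would fail in the small-sender-to-large-receiver subcase. This fact is implicit in the canonical-skip-sequence construction of Section~\ref{sec:recvschedule} but must be stated and proven explicitly. Beyond this, the argument is careful bookkeeping, with the distinguished processors $r=0$ and $r=p$ meriting separate mention because their baseblocks change between the $p$- and $2p$-processor schedules: $b_0$ shifts from $q$ to $q+1$, and $b_p$ is introduced with value $q$.
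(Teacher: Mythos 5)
The paper offers no proof of this observation at all: it simply remarks that correctness ``can be verified with the aid of Table~\ref{tab:p9} and Table~\ref{tab:p18}'', i.e., by inspecting the $p=9\rightarrow 18$ example. Your proposal therefore supplies an argument where the paper has none, and the argument is essentially sound: checking Conditions~1, 2 and~\ref{cor:recvbefore} by a small/large case split on sender and receiver is the right decomposition, and the crux you isolate --- that a positive $\sendblock{k}_r$ in the $p$-processor schedule forces $r+\skips{k}<p$ --- is exactly what makes the small-sender/large-receiver subcase harmless. Your justification of that lemma is also the right one: a positive $\recvblock{k}_t=b_t$ occurs only in the round $k$ equal to the largest index of $t$'s canonical skip sequence (Lemma~\ref{lem:canonical} and the discussion following Algorithm~\ref{alg:baseblock}), so $t\geq\skips{k}$, the sender is the partial sum $t-\skips{k}\geq 0$, and hence $r+\skips{k}=t<p$ without wraparound. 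Two caveats. First, this lemma really uses the canonical-path delivery of baseblocks from Section~\ref{sec:recvschedule}, not merely Conditions~1--4; since the observation quantifies over an arbitrary ``correct send schedule'', you should either restrict the claim to the schedules the paper actually constructs or add canonical baseblock delivery as an explicit hypothesis. Second, processor $r=p$ needs the virtual-initial-block identity $b_p-(q+1)=-1$ not only in the new round $q$ but also in every inherited round $k<q$: its preimage is the root, whose sends are all positive and are all rewritten to $-1$, and the root's receive schedule contains no positive entry that Observation~\ref{obs:double} could have rewritten to $-1$ at some $j^*<k$. Neither point undermines the approach; both are one-line repairs.
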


The correctness of Observation~\ref{obs:doublesend} can be verified
with the aid of Table~\ref{tab:p9} and Table~\ref{tab:p18} which
doubles $p$ from $9$ to $18$. There is no similarly simple observation
on how to construct a send schedule for $2p-1$ processors from a $p$
processor send schedule. Nevertheless, the send schedule computation
that we adopt here builds on Observation~\ref{obs:doublesend} as much
as possible.

\begin{algorithm}
  \caption{Computing the send schedule for processors $r=0$ (root) and $r,
    0\leq r<p$.}
  \label{alg:sendschedule}
  \begin{algorithmic}
    \Procedure{Sendschedule}{$r,\sendblock{q}$}
    \If{$r=0$}
    \For{$k=0,\ldots,q-1$}\ $\sendblock{k}\gets k$
    \EndFor
    \Else
    \State $b\gets\Call{baseblock}{r}$
    \State $r',c,e \gets r,b,p$
    \For{$k=q-1,\ldots,1$}
    \Comment Obvious invariant: $r'<e$
    \If{$r'<\skips{k}$} \Comment Lower part 
    \If{$r'+\skips{k}<e\vee e<\skips{k-1}\vee (k=1\wedge b>0)$}
    $\sendblock{k}\gets c$
    \Else\Comment Violation 
    \State \Call{recvschedule}{$(r+\skips{k})\bmod p,\ablock[]$} 
    \State $\sendblock{k}\gets\ablock[k]$
    \EndIf
    \If{$e>\skips{k}$} $e\gets \skips{k}$
    \EndIf
    \Else \Comment Upper part, $r'\geq\skips{k}$
    \State $c\gets k-q$
    \If{$k=1\vee r'>\skips{k}\vee e-\skips{k}<\skips{k-1}$}
    $\sendblock{k}\gets c$
    \ElsIf{$r'+\skips{k}>e$} \Comment Violation
    \State \Call{recvschedule}{$(r+\skips{k})\bmod p,\ablock[]$}
    \State $\sendblock{k}\gets\ablock[k]$
    \Else\ $\sendblock{k} \gets c$
    \EndIf
    \State $r',e\gets r'-\skips{k},e-\skips{k}$
    \EndIf
    \EndFor
    \State $\sendblock{0}\gets b-q$
    \EndIf
    \EndProcedure
  \end{algorithmic}
\end{algorithm}

The root processor $r=0$ greedily sends the blocks $0,1,\ldots q-1$ in
the first $q$ communication rounds.  For the send schedule computation
for the other processors, we go through the $q$ communication rounds
starting from $k=q-1$ down to $k=0$. In each round, we divide the
processors into a lower and and upper part, those with $r'<\skips{k}$
and those with $\skips{k}\leq r'$, where $r'$ is a virtual processor
index to be explained. Initially, $r'=r$, and we maintain an invariant
upper bound $e$ on the range of virtual processors, $0\leq
r'<e$. Initially, $e=\skips{q}=p$. To maintain the invariant from
round $k$ to round $k-1$, if $r'$ is in the upper part, both $r'$ and
$e$ are decreased by $\skips{k}$ at the end of round $k$, and if $r'$
is in the lower part, then $e=\skips{k}$ for round $k-1$ unless
already $e<\skips{k}$. For processors in the upper part, $e$ keeps
track of the number of times that $\skips{k}+\skips{k}\neq
\skips{k+1}$, that is, the number of times that $\skips{k+1}$ is odd.
Virtual processors in the lower part aim to send the same block $c$ in
round $k$ as they already sent in round $k+1$. Initially, for these
processors, $c=b_r$ where $b_r$ is the baseblock for processor $r$ as
found by Algorithm~\ref{alg:baseblock}. Virtual processors in the
upper part aim to send block $c=k-q$ in round $k$. These processors
start with block $c=(q-1)-q=-1$ and decrease $c$ by $1$ in each round
as indicated by Observation~\ref{obs:doublesend}. The complete
algorithm maintaining the invariant on the virtual processor indices
is shown as Algorithm~\ref{alg:sendschedule}.

If in round $k$, following Observation~\ref{obs:doublesend}, the $r'$
for processor $r$ is in the lower part, $r'<\skips{k}$, the processors
$r'+\skips{k}<e$ have not yet received block $c$, so $c$ is to be sent
if $r'+\skips{k}<e$. Otherwise, it we do not know which block
processor $(r+\skips{k})\bmod p$ is missing, so in that case the
receive block (as computed by Algorithm~\ref{alg:recvschedule}) for
round $k$ for processor $(r+\skips{k})\bmod p$ is taken as the block
to send. This takes $O(q)$ steps, and is called a \emph{violation}. If
there are more than a constant number of such violations for some
processor $r$, a logarithmic number of operations in total for that
processor cannot be guaranteed. If $e$ is very small, $e<\skips{k-1}$,
processor $(r+\skips{k})\bmod p$ will not have received $c$ and this
block can therefore be sent.

If instead $r'$ is in the upper part for round $k$, then only the
processor with $r'=\skips{k}$ may have to use the receive block for
processor $(r+\skips{k})\bmod p$ as the block to send. 

\begin{theorem}
  \label{thm:sendsummary}
  Algorithm~\ref{alg:sendschedule} computes for any $r,0\leq r<p$ a
  send schedule in $O(\log p)$ operations.
\end{theorem}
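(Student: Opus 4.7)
The statement bundles a complexity claim with an implicit correctness claim, and I would attack these separately.

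For the complexity bound, I would first note that the outer for-loop executes exactly $q-1=\Theta(\log p)$ iterations, and each body performs only $O(1)$ arithmetic, array and comparison operations except in the two \emph{violation} branches, each of which invokes \textsc{recvschedule} at cost $O(\log p)$ by Theorem~\ref{thm:recvsummary}. Hence the total running time is $O(\log p)\,(1+V)$ where $V$ is the number of rounds in which a violation is triggered on the given $r$. The whole argument therefore reduces to showing $V=O(1)$.

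To bound $V$, I would trace the evolution of the pair $(r',e)$ across the downward loop: initially $(r',e)=(r,p)$, in the upper-part branch both coordinates drop by $\skips{k}$, and in the lower-part branch $r'$ is unchanged while $e$ is capped at $\skips{k}$. By Observation~\ref{obs:skips}, at each level $e-\skips{k}$ is either $\skips{k}$ (clean halving) or $\skips{k}-1$ (a parity defect). Inspecting the two violation predicates---$r'+\skips{k}\geq e$ with $e\geq\skips{k-1}$ in the lower part, and $r'=\skips{k}$ with $r'+\skips{k}>e$ in the upper part---shows that each violation is triggered only at a round $k$ for which the parity defect is active at the current $e$, i.e., essentially at a boundary where $\skips{k+1}$ is odd in the sense of Observation~\ref{obs:cut}. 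Using Observation~\ref{obs:tunnel} together with the remark after it that $\skips{k-2}+\skips{k-1}<\skips{k}$ for $k>3$, and the fact that each violation causes $e$ to be clamped tightly to $\skips{k}$ or $\skips{k}-1$ so that the subsequent levels follow the clean branch until the next exceptional boundary is encountered, a potential argument on $e$ should yield a constant bound on the number of violations along the descent of a single processor.

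For correctness, the required identities are $\sendblock{k}_r=\recvblock{k}_{(r+\skips{k})\bmod p}$ (Conditions~1 and~2) and the chronological Condition~\ref{cor:recvbefore}. The root case is immediate, since $\sendblock{k}_0=k$ is exactly the baseblock delivered to processor $\skips{k}$ in round $k$ by Algorithm~\ref{alg:recvschedule}. For general $r$ I would induct downward on $k$, using Observation~\ref{obs:doublesend} as the template: in non-violation rounds the value of $c$ carried through the loop coincides with the prescription of that observation, and therefore with the computed receive block of $(r+\skips{k})\bmod p$; in violation rounds, the explicit call to \textsc{recvschedule} makes the identity hold by definition. Condition~\ref{cor:recvbefore} is then a bookkeeping check: $\sendblock{0}_r=b-q$ is the baseblock, and for $k\geq 1$ each sent block either equals a block sent in a later (higher-$k$) round---hence already received in an earlier communication round by Theorem~\ref{thm:scheduleuse}---or equals $c=k-q$, which lies among the blocks received in rounds before $k$. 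The main obstacle is thus the $V=O(1)$ bound: the violation predicates couple $r'$, $e$ and two different skips, and making ``each violation consumes one odd-$\skips{}$ boundary'' precise will require the potential argument sketched above, together with careful handling of the base levels $k\in\{1,2,3\}$ where Observation~\ref{obs:tunnel} permits the exceptional coincidences $\skips{k-2}+\skips{k-1}=\skips{k}$.
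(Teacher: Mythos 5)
Your reduction of the theorem to the claim that the number of violations $V$ is $O(1)$ matches the paper exactly, and your accounting of the non-violation iterations is fine. But the proposal does not actually prove $V=O(1)$, and the route you sketch for it would not go through as stated. You claim that a violation can be triggered ``only at a round $k$ for which the parity defect is active at the current $e$, i.e., essentially at a boundary where $\skips{k+1}$ is odd,'' and then appeal to Observation~\ref{obs:tunnel} to bound the number of such boundaries. Two problems. First, Observation~\ref{obs:tunnel} bounds the number of $k$ with $\skips{k-2}+\skips{k-1}=\skips{k}$, not the number of odd skips; the latter can be $\Theta(\log p)$ (take $p=65$, with skips $65,33,17,9,5,3,2,1$, six of which are odd), so ``one violation per odd boundary'' would give no constant bound. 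Second, the lower-part violation test $r'+\skips{k}\geq e$ depends on how far $e$ has drifted below $\skips{k+1}$, and that deficit \emph{accumulates} over all higher rounds in which the upper branch was taken with $\skips{i}+\skips{i}=\skips{i+1}+1$; it is not attributable to a single parity defect at round $k$. Your own closing sentence concedes that this is ``the main obstacle,'' which is another way of saying the proof is not there.

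What the paper does at this point, and what is missing from your proposal, is a quantitative bound on the accumulated deficit: using the upper bound of Lemma~\ref{lem:bounds}, the smallest possible $e$ at iteration $k$ is $p-\sum_{i=k+1}^{q-1}\skips{i}=\skips{k+1}+1-q+k$, so $e$ can fall below $\skips{k+1}$ by at most $q-1-k$. The dangerous regime for a repeated violation is $\skips{k-1}\leq e\leq\skips{k}$, and combining the two bounds with Observation~\ref{obs:skips} forces $\skips{k-1}\leq\frac{q+1-k}{2}$ and $\frac{q-1-k}{3}\leq\skips{k-1}$ simultaneously; since the skips roughly halve, these two inequalities sandwich $k$ into a constant-width window, so this case contributes only $O(1)$ further violations. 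Together with the easy cases (at most one upper-part violation, since afterwards $r'=0$ permanently, and at most two while $e\geq\skips{k}$), this yields the paper's bound of at most four. Your correctness discussion is a reasonable (if equally sketchy) addition beyond what the theorem claims, but without the Lemma~\ref{lem:bounds} computation the complexity claim --- the actual content of the theorem --- remains unproven.
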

\begin{proof}
  The loop performs $q-1$ iterations.  Iterations that are not
  violations take constant time. We will show that there can be only a
  constant number of violations for $r$, namely at most four.
  Each violation takes $O(\log p)$ steps by the receive schedule
  Theorem~\ref{thm:recvsummary}. Therefore, the send schedule
  computation takes $\Theta(\log p)$ steps.

  Assume that a violation happens at iteration $k$ for $r'$, under
  what conditions can a violation happen again for a smaller $k$?  If
  $r'$ is in the upper part, $r'\geq\skips{k}$, the violation can
  happen only for $r'=\skips{k}$ by the first condition (indeed,
  $\sendblock{k}=c$ when $r'>\skips{k}$). For the next, lower
  iteration, $r'$ is decreased by $\skips{k}$ to $r'=0$ and will
  remain so for all remaining iterations, thus it will never again
  hold that $r'\geq\skips{k}$. We therefore only have to consider
  violations for the lower part case where $r'\leq\skips{k}$.  A
  violation happens if $r'+\skips{k}\geq e$. If $e\geq\skips{k}$,
  $e=\skips{k}$ for the next, lower iteration. Thus, a violation
  cannot happen in iteration $k-1$ if $r'<\skips{k-1}$. If
  $r'\geq\skips{k-1}$, $r'$ will be in the upper part for iteration
  $k-1$ with $e=\skips{k}$ and so $r'+\skips{k-1}>\skips{k}=e$ can
  happen only for $r'=\skips{k-1}$ (and by Observation~\ref{obs:skips}
  only if $\skips{k}$ is odd, which is then taken care of in the upper
  part for iteration $k-1$). Thus, as long as $e\geq\skips{k}$ there
  can be at most two violations for $r$. The difficult cases are when
  $e<\skips{k}$ which will cause a violation for $\skips{k-1}\leq r'$.
  These $r'$ will be in the upper part for the next iteration $k-1$,
  but there could possibly be a violation for some later, smaller
  $k$. At which iterations can it happen that $\skips{k-1}\leq
  e\leq\skips{k}$?

  The smallest possible $e$ in iteration $k$ is
  \begin{eqnarray*}
    p-\sum_{i=k+1}^{q-1}\skips{i} & = &
    p -(\sum_{i=0}^{q-1}\skips{i}-\sum_{i=0}^{k}\skips{i}) \\
    & = & p-(\skips{q}+q-1 -(\skips{k+1}+k)) \\
    & = & p-(p+q-1-\skips{k+1}-k) \\
    & = & \skips{k+1}+1-q+k
  \end{eqnarray*}
  by the upper bound of Lemma~\ref{lem:bounds}.

  A violation always happens if $\skips{k-1}\leq e\leq\skips{k}$.
  For the smallest
  possible $e$, we have $\skips{k+1}+1-q+k\leq\skips{k}$ implying
  $\skips{k+1}-\skips{k}\leq q-1-k$ and with
  Observation~\ref{obs:skips} that $\skips{k}\leq q-k$ and with
  Observation~\ref{obs:skips} again that $\skips{k-1}\leq
  \frac{q+1-k}{2}$.  Likewise, for a violation to happen,
  $\skips{k-1}\leq e$ gives $\skips{k-1}\leq\skips{k+1}+1-q+k$. Using
  Observation~\ref{obs:skips} twice, this implies $\skips{k-1}\leq
  4\skips{k-1}+1-q+k$ and therefore $\frac{q-1-k}{3}\leq \skips{k-1}$.
  Since $\skips{k}$ are roughly halving with decreasing $k$, both
  inequalities are fulfilled for a very small number of $k$ on the order
  of $O(\log\log p)$. This is small enough, that such a violation cannot
  happen again.

  Counting, there are at most four times that a violation can happen for
  $r$.
\end{proof}

Theorem~\ref{thm:recvsummary} and Theorem~\ref{thm:sendsummary} are
crucial, also for practical use of the schedule computations. To
verify the implementations, finite, exhaustive proof for $p$ up to
some millions shows that the computed receive and send schedules are
correct and consistent according to the four conditions in
Section~\ref{sec:broadcasts}, that the number of violations claimed in
Theorem~\ref{thm:sendsummary} is indeed at most $4$ and that the
stated invariants all hold (see the appendix).  Table~\ref{tab:p17}
shows receive and send schedules as computed by the algorithms for
$p=17$ processors (not a power of two). There are, for instance, send
schedule violations in round $k=2$ for processor $r=3$ and in round
$k=3$ for processor $r=8$.

\section{Empirical Results}
\label{sec:empirics}

As experimental evidence, the algorithms for computing receive and
send schedules in $O(\log p)$ time steps are being used in new
implementations of \mpibcast, \mpiallgatherv, \mpireduce,
\mpireducescatterblock and \mpireducescatter. All implementations are
available from the author. We show here preliminary results with
\mpibcast (Algorithm~\ref{alg:broadcast}) and \mpireduce on a large
HPC system and \mpiallgatherv with irregular problems on a small
system. We compare our new implementations against a native MPI
library implementation.  The best number of blocks $n$ leading to the
smallest broadcast time is chosen based on a linear cost model. For
\mpibcast, the size of the blocks is chosen as $F\sqrt{m/\ceiling{\log
    p}}$ for a constant $F$ chosen experimentally. For \mpiallgatherv,
the number of blocks to be used is chosen as $\sqrt{m\ceiling{\log
    p}}/G$ for another, experimentally determined constant $G$.  The
constants $F$ and $G$ depend on context, system, and MPI
library. Finding a best $n$ in practice is a highly interesting
problem outside the scope of this work.

\begin{figure}
  \begin{center}
    \includegraphics[width=.32\linewidth]{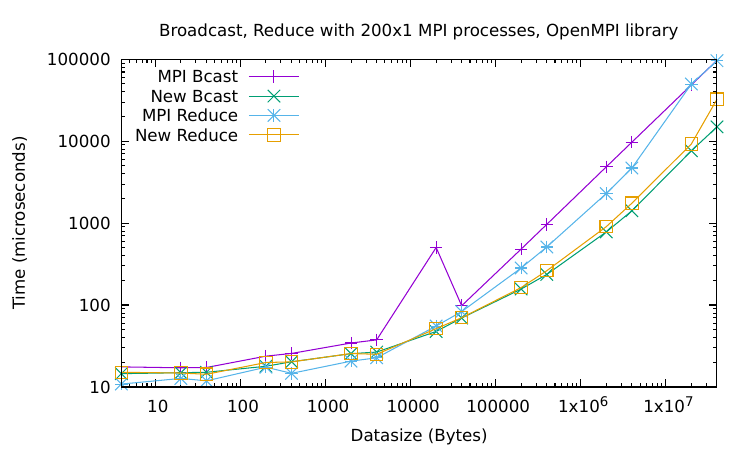}
    \includegraphics[width=.32\linewidth]{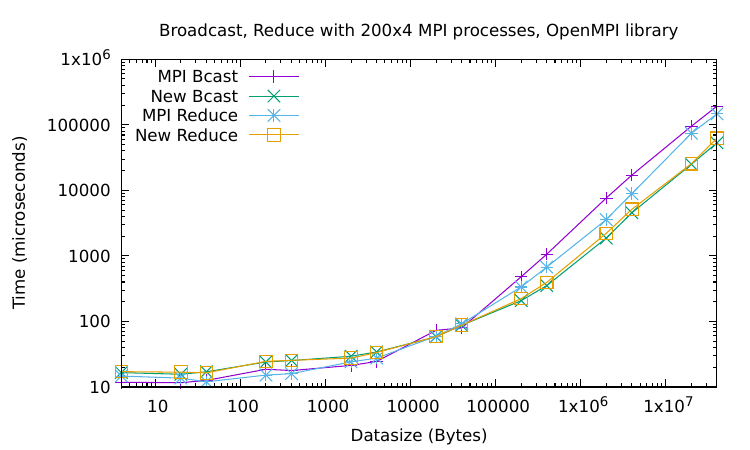}
    \includegraphics[width=.32\linewidth]{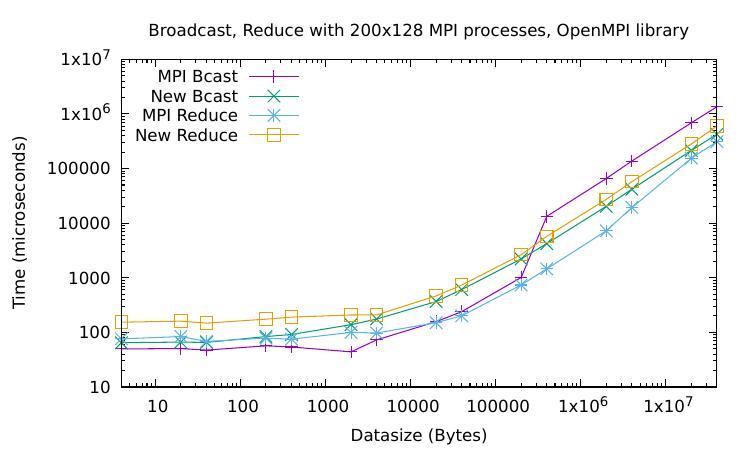}
  \end{center}
  \caption{Broadcast and reduce results, native versus new, with the
    \vegampi library with $p=200\times 1, p=200\times 4$ and $p=200\times 128$
    MPI processes.  The constant factor $F$ for the size of the blocks
    has been chosen as $F=70$. The MPI datatype is \mpiint.}
  \label{fig:bcastreducevega}
\end{figure}

The large system is the VEGA system in Slowenia, which is comprised of
768 CPU based compute nodes each two AMD EPYC 7H12 Rome 64-core CPUs
running at 2.6 GHz for a total of 128 cores per node.  We have run
with 200 nodes and in Figure~\ref{fig:bcastreducevega} show results
with $p=200\times 1$, $p=200\times 4$ and $p=200\times 128$ MPI
processes. Our new \mpibcast and \mpireduce implementations are run
against the native \vegampi library implementation. Even though our
implementation is not specifically tuned, and in particular not
adapted to the hierarchical structure of the system, our
implementation is faster than the native MPI library by a factor of
more than $4$ and $3$, respectively for the one and four processes per
node configurations for the large problem sizes. Even with full nodes,
our (non-hierarchical, fully-connected) implementation of \mpibcast is
a factor of 3 faster than the native implementation. Part of the
reason for this must be poor implementations in the \vegampi library.
We will focus on fine tuning and hierarchical implementation of our
new algorithms elsewhere.

The small system is a $36\times 32$ processor cluster with 36~dual socket
compute nodes, each with two Intel(R) Xeon(R) Gold 6130F 16-core
processors. The nodes are interconnected via dual Intel Omnipath
interconnects each with a bandwidth of 100 GigaBytes/s. The
implementations and benchmarks were compiled with \gcc with the
\texttt{-O3} option.  The MPI library used is \hydraopenmpi.

\begin{figure*}
  \begin{center}
    \includegraphics[width=.32\linewidth]{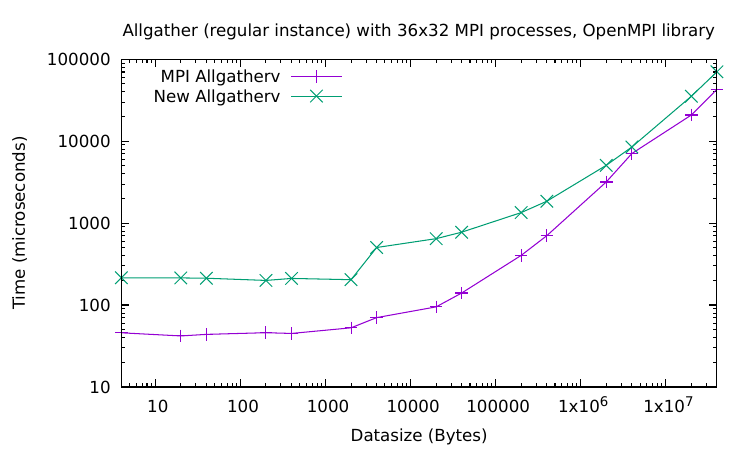}
    \includegraphics[width=.32\linewidth]{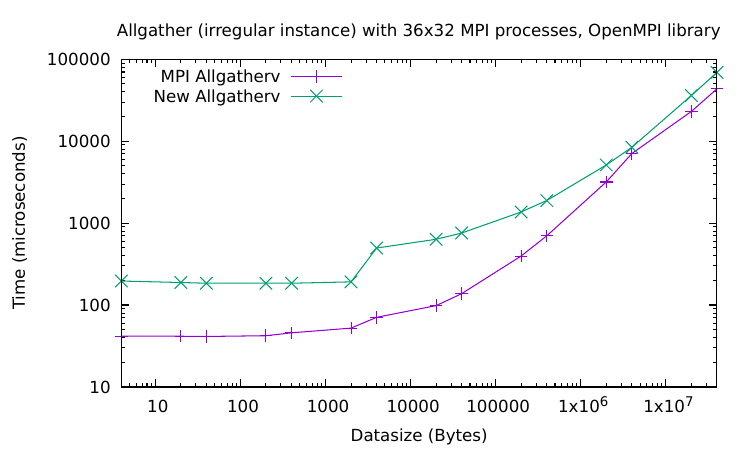}
    \includegraphics[width=.32\linewidth]{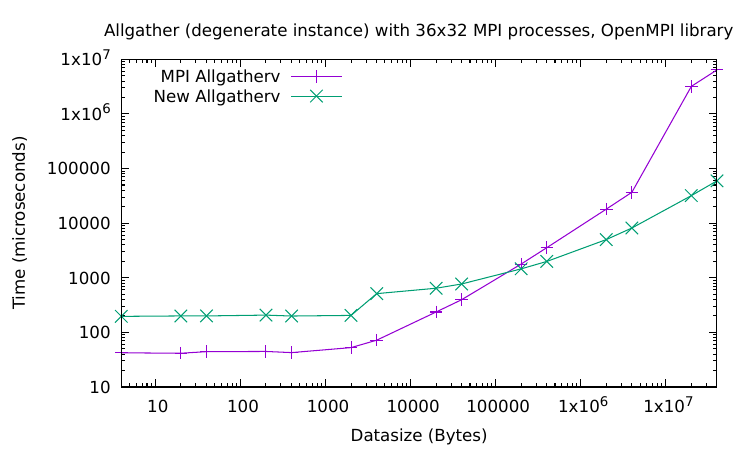}
  \end{center}
  \caption{Irregular all-broadcast (\mpiallgatherv) results, native
    versus new, with the \hydraopenmpi library with $p=36\times 32$
    MPI processes and different types of input problems (regular,
    irregular, degenerate).  The constant factor $G$ for the number of
    blocks has been chosen as $G=40$. The MPI datatype is \mpiint.}
  \label{fig:allgatopenmpi}
\end{figure*}

The results for \mpiallgatherv are shown in
Figure~\ref{fig:allgatopenmpi} for $p=36\times 32$ MPI processes and
different types of input problems (see the appendix for a detailed
algorithm description). The \emph{regular problem} divides the given
input size $m$ roughly evenly over the processes in chunks of $m/p$
elements. The \emph{irregular problem} divides the input in chunks of
size roughly $(i\bmod 3) m/p$ for process $i=0,1,\ldots, p-1$. The
\emph{degenerate problem} has one process contribute the full input of
size $m$ and all other process no input elements. For the degenerate
problem, the performance of the \hydraopenmpi baseline library indeed
degenerates and is a factor of close to 100 slower than the new
implementation, where the running time is largely independent of the
problem type. The running time of the new \mpiallgatherv
implementation is in the ballpark of \mpibcast for the same total
problem size. This is highly attractive for MPI libraries and not
guaranteed by other implementations. The overheads for the schedule
computations are one to two orders of magnitude smaller than those
shown in~\cite{Traff22:bcastba}.

\section{Summary}

We showed that round-optimal broadcast schedules on fully connected,
one-ported, fully bidirectional $p$-processor systems can indeed be
computed in $O(\log p)$ time steps per processor. This is a surprising
and highly non-trivial result that affirmatively answers the long
standing open questions posed
in~\cite{Traff22:bcastba,Traff22:bcast,Traff08:optibcast}. One reason
for the difficulty and the awkward proofs is that there is no obvious
counterpart to Observation~\ref{obs:double} and
Observation~\ref{obs:doublesend} for odd, $2p-1$, number of
processors. We furthermore observed that such schedules that rely on
regular, circulant graph communication patterns likewise give rise to
round-optimal, pipelined algorithms for the related all-broadcast,
reduction and all-reduction operations. We outlined how to use the
schedules for implementation of the MPI collectives \mpibcast,
\mpiallgatherv, \mpireduce, \mpireducescatterblock and
\mpireducescatter, and gave promising, initial experiments.  A more
careful evaluation of these implementations, also in versions that are
more suitable to systems with hierarchical, non-homogeneous
communication systems~\cite{Traff20:mpidecomp} is ongoing and should
be found elsewhere.

\bibliographystyle{plainurl}
\bibliography{traff,parallel} 

\appendix
\section{All-Broadcast}

\begin{algorithm}
  \caption{The $n$-block all-to-all broadcast algorithm for processor
    $r,0\leq r<p$ for data in the arrays $\mathtt{bufs}[j],0\leq
    j<p$. The count $x$ is the number of empty first rounds.  Blocks
    smaller than $0$ are neither sent nor received, and for blocks
    larger than $n-1$, block $n-1$ is sent and received instead.}
  \label{alg:irregallgather}
  \begin{algorithmic}
    \Procedure{AllBroadcast}{$\mathtt{bufs}[p][n]$}
    \For{$j=0,1,\ldots,p-1$}
    \State $r'\gets (r-j+p)\bmod p$
    \State$\Call{recvschedule}{r',\mathtt{recvblocks}[j][]}$
    \EndFor
    \For{$j=0,1,\ldots,p-1$}
    \For{$k=0,1,\ldots,q-1$}
    \State $f\gets (j-\skips{k}+p)\bmod p$
    \State$\mathtt{sendblocks}[j][k]\gets\mathtt{recvblocks}[f][k]$
    \EndFor
    \EndFor
    \State $x\gets (q-(n-1)\bmod q)\bmod q$
    \Comment Virtual rounds, only when $x>0$
    \For{$j=0,1,\ldots, p-1$}
    \For{$i=0,1,\ldots,q-1$}
    \State $\mathtt{recvblocks}[j][i]\gets\mathtt{recvblocks}[j][i]-x$
    \State $\mathtt{sendblocks}[j][i]\gets\mathtt{sendblocks}[j][i]-x$
    \If{$i<x$} \Comment Virtual rounds before $x$ already done
    \State $\mathtt{recvblocks}[j][i]\gets\mathtt{recvblocks}[j][i]+q$
    \State $\mathtt{sendblocks}[j][i]\gets\mathtt{sendblocks}[j][i]+q$
    \EndIf
    \EndFor
    \EndFor
    \For{$i=x,x+1,\ldots,(n+q-1+x)-1$}
    \State $k\gets i\bmod q$
    \State $t, f\gets (r+\skips{k})\bmod p,(r-\skips{k}+p)\bmod p$ 
    \State $j'\gets 0$
    \For{$j=0,1,\ldots,p-1$} \Comment Pack
    \If{$j\neq t$}
    \State $\mathtt{tempin}[j']\gets\mathtt{bufs}[j][\mathtt{sendblocks}[j][k]]$
    \State $j'\gets j'+1$
    \EndIf
    \State $\mathtt{sendblocks}[j][k]\gets \mathtt{sendblocks}[j][k]+q$ \EndFor

    \State $\bidirec{\mathtt{tempin},t}{\mathtt{tempout},f}$

    \State $j'\gets 0$
    \For{$j=0,1,\ldots,p-1$} \Comment Unpack
    \If{$j\neq r$}
    \State $\mathtt{bufs}[j][\mathtt{recvblocks}[j][k]],\gets\mathtt{tempout}[j']$
    \State $j'\gets j'+1$
    \EndIf
    \State $\mathtt{recvblocks}[j][k]\gets \mathtt{recvblocks}[j][k]+q$
    \EndFor
    \EndFor
    \EndProcedure
\end{algorithmic}
\end{algorithm}

The explicit send and receive schedules can be used for all-broadcast as
shown as Algorithm~\ref{alg:irregallgather}. In this
problem, each processor has $n$ blocks of data to be broadcast to all
other processors. The $n$ blocks for each processor $r$ are as for the
broadcast algorithm assumed to be roughly of the same size, but blocks
from different processors may be of different size as long as the same
number of $n$ blocks are to be broadcast from each processor. The
algorithm can therefore handle the irregular case where different
processors have different amounts of data to be broadcast as long as
each divides its data into $n$ roughly equal-sized blocks.  Due to the
fully symmetric, circulant graph communication pattern, this can be
done by doing the $p$ broadcasts for all $p$ processors $r, 0\leq r<p$
simultaneously, in each communication step combining blocks for all
processors into a single message.  The blocks for processor $j$ are
assumed to be stored in the buffer array $\mathtt{bufs}[j][]$
indexed by block numbers from $0$ to $n-1$.  Initially, processor $r$
contributes its $n$ blocks from $\mathtt{bufs}[r][]$. The task is
to fill all other blocks $\mathtt{bufs}[j][]$ for $j\neq r$.  Each
processor $r$ computes a receive schedule $\mathtt{recvblocks}[j]$ for
each other processor as root processor $j,0\leq j<p$ which is the
receive schedule for $r'=(r-j+p)\bmod p$.
Before each communication operation, blocks for all processors
$j,0\leq j<p$ are packed consecutively into a temporary buffer
\texttt{tempin}, except the block for the to-processor $t^k$ for the
communication round. This processor is the root for that block, and
already has the corresponding block. After communication, blocks from
all processors are unpacked from the temporary buffer \texttt{tempout}
into the $\mathtt{bufs}[j][]$ arrays for all $j,0\leq j<p$ except for
$j=r$: A processor does not receive blocks that it already has. As in
the broadcast algorithm in Algorithm~\ref{alg:broadcast}, it is
assumed that the packing and unpacking will not pack for negative
block indices, and that indices larger than $n-1$ are taken as
$n-1$. Also packing and unpacking blocks for processors not
contributing any data (as can be the case for highly irregular
applications of all-to-all broadcast) shall be entirely skipped (not
shown in Algorithm~\ref{alg:irregallgather}), so that the total time
spent in packing and unpacking per processor over all communication
rounds is bounded by the total size of all $\mathtt{bufs}[j][], j\neq
t^k$ and $\mathtt{bufs}[j][], j\neq r$.

\section{Overheads}

\begin{table}
  \caption{Timings of old $O(\log^3 p)$ and new $O(\log p)$ time step
    receive and send schedule algorithms for different ranges of
    processors $p$. Receive and send schedules are computed for all
    processors $0\leq r<p$ for all $p$ in the given ranges. The
    expected running times are thus bounded by $O(p\log^3 p)$ (old)
    and $O(p \log p)$ (new) time, respectively.  Times are in seconds
    and measured with the \texttt{clock()} function. We also estimate
    the average time spent per processor for computing its send and
    receive schedules of $\ceiling{\log_2 p}$ entries. This is done by
    measuring for each $p$ the total time for the schedule
    computation, dividing by $p$ and averaging over all $p$ in the
    range. These times are in micro seconds.}
  \label{tab:runtimes}
  \begin{small}
  \begin{center}
    \begin{tabular}{crrrr}
      Proc.\ range $p$ &
      \multicolumn{2}{c}{Total Time (seconds)} &
      \multicolumn{2}{c}{Per proc.\ ($\mu$seconds)} \\
      & $O(p\log^3 p)$ & $O(p\log p)$ & $O(\log^3 p)$ & $O(\log p)$ \\
      \toprule
      $[1,17\,000]$ & 443.8 & 50.0 & 2.769 & 0.334 \\
      $[16\,000,33\,000]$ & 1567.2 & 152.8 & 3.763 & 0.370 \\
      $[64,000\,73\,000]$ & 3206.0 & 282.6 & 5.187 & 0.454 \\
      $[131\,000,140\,000]$ & 7595.0 & 653.2 & 6.226 & 0.534 \\
      $[262\,000,267\,000]$ & 9579.4 & 726.6 & 7.242 & 0.548 \\
      $[524\,000,529\,000]$ & 21580.2 & 1492.9 & 8.196 & 0.566 \\
      $[1\,048\,000,1\,050\,000]$ & 18934.3 & 1083.8 & 9.024 & 0.516 \\
      $[2\,097\,000,2\,099\,000]$ & 44714.9.0 & 2554.6 & 10.656 & 0.608 \\
      \bottomrule
    \end{tabular}
  \end{center}
  \end{small}
\end{table}

To demonstrate the practical impact of the improvement from $O(\log^3
p)$ time steps (per processor) which was the bound given
in~\cite{Traff22:bcast} to $O(\log p)$ steps per processor as shown
here, we run the two algorithms for different ranges of processors
$p$. For each $p$ in range, we compute both receive and send schedules
for all processors $r, 0\leq r<p$, and thus expect total running times
bounded by $O(p\log^3 p)$ and $O(p\log p)$, respectively.  These
runtimes in seconds, gathered on a standard workstation with an Intel
Xeon E3-1225 CPU at 3.3GHz and measured with the \texttt{clock()}
function from the \texttt{time.h} C library, are shown in
Table~\ref{tab:runtimes}. The timings include both the receive and the
send schedule computations, but exclude the time for verifying the
correctness of the schedules, which has also been performed up to some
$p\geq 2M, M=2^{20}$ (and for a range of $100\,000$ processors around
$16M$), including verifying the bounds on the number of recursive
calls from Theorem~\ref{thm:recvsummary} and the number of
violations from Theorem~\ref{thm:sendsummary}. When receive and send
schedules have been computed for all processors $r,0\leq r<p$,
verifying the four correctness conditions from
Section~\ref{sec:broadcasts} can obviously be done in $O(p\log p)$
time steps. The difference between the old and the new implementations
is significant, from close to a factor of 10 to significantly more
than a factor of 10. However, the difference is not by a factor of
$\log^2 p$ as would be expected from the derived upper bounds.  This
is explained by the fact that the old send schedule implementations
employ some improvements beyond the trivial computation from the
receive schedules which makes the complexity closer to $O(\log^2 p)$.
We also give a coarse estimate of the time spent per processor
by measuring for each $p$ in the given range the time for the schedule
computations for all $p$ processors, dividing this by $p$ and
averaging over all $p$ in the given range. This is indicative of the
overhead for the \textsc{recvblock()} and \textsc{sendblock()}
computations in the implementations of Algorithm~\ref{alg:broadcast}
and Algorithm~\ref{alg:irregallgather}. These times (in microseconds)
are listed as columns $O(\log^3 p)$ and $O(\log p)$, respectively.
The difference by a factor of $10$ and more is slowly increasing with
$\log_2 p$.

\end{document}